\def\acm{1}
\def\llncs{0}
\def\ieee{0}
\def\submissionnumber{150}
\ifnum\pdfstrcmp{\jobname}{main}=0
\def\anonymize{0}
\def\release{1}
\def\shrink{0}
\ifnum\pdfstrcmp{\jobname}{draft}=0
\def\anonymize{0}
\def\release{0}
\def\shrink{0}
\ifnum\pdfstrcmp{\jobname}{anonymized}=0
\def\anonymize{1}
\def\release{1}
\def\shrink{1}
\ifnum\pdfstrcmp{\jobname}{output}=0
\def\anonymize{0}
\def\release{1}
\def\shrink{0}
  \def\safeqed{\qed}
  \newtheorem{observation}{Observation}
  \newtheorem{assumption}{Assumption}
  \def\safeqed{}
  \def\safeqed{}
  \newtheorem{lemma}{Lemma}
  \newtheorem{definition}{Definition}
  \newtheorem{assumption}{Assumption}
  \newtheorem{theorem}{Theorem}
\title{Fair Ordering in Replicated Systems via Streaming Social Choice}
\newcommand{\geoffemail}{geoff.ramseyer@cs.stanford.edu}
\newcommand{\ashishemail}{ashishg@stanford.edu}
\newcommand{\acmauthors}
{
\if\anonymize1
\author{Submission \submissionnumber}
\else
\author{Geoffrey Ramseyer}
\affiliation{%
  \institution{Stanford University}
  \city{Stanford}
  \state{California}
  \postcode{94305}
  \country{USA}}
\email{\geoffemail}
\author{Ashish Goel}
\affiliation{%
  \institution{Stanford University}
  \city{Stanford}
  \state{California}
  \postcode{94305}
  \country{USA}}
\email{\ashishemail}
\fi
}
\newcommand{\llncsauthors}
{
\if\anonymize1
\author{Submission \submissionnumber}
\authorrunning{Submission \submissionnumber}
\institute{}
\else
\author{Geoffrey Ramseyer\inst{1}\and Ashish Goel\inst{2}}
%
\authorrunning{G. Ramseyer and A. Goel}
%
\institute{Stanford University, Stanford, CA 94305, USA
\email{\geoffemail} \and 
Stanford University, Stanford, CA 94305, USA
\email{\ashishemail}}
\fi
}
\newcommand{\ieeeauthors}
{
\if\anonymize1
\author{%
  \IEEEauthorblockN{Submission \submissionnumber}
}
\else

\author{%
  \IEEEauthorblockN{Geoffrey Ramseyer}
\IEEEauthorblockA{Stanford University \\
\geoffemail{}}
\and
\IEEEauthorblockN{Ashish Goel}
\IEEEauthorblockA{Stanford University\\
\ashishemail{}}
} 
\fi

}
\newcommand\XXX[1]{}
\newcommand\XXX[1]{\begingroup \bfseries\color{red} XXX: #1\endgroup}
\newcommand{\tx}{\tau}
\newcommand{\txp}{\tau^\prime}
\newcommand{\txpp}{\tau^{\prime\prime}}
\newcommand{\before}{\prec}
\newcommand{\aaaielide}[2]{#1}
\newtheorem{assumption}[theorem]{Assumption}
\begin{document}


\if\acm1
\begin{abstract}

How can we order transactions ``fairly'' in a replicated state machine (of which today's blockchains are a prototypical example)?
In the model of prior work \cite{themis,kelkar2020order,cachin2022quick,vafadar2023condorcet,kiayias2024ordering},
each of $n$ replicas observes transactions
in a different order, and the system aggregates
these observed orderings into a single order.  We argue that this problem
is best viewed directly through the lens of the classic preference aggregation problem of social choice theory 
(instead of as a distributed computing problem),
in which rankings on candidates are aggregated into an election result.

Two features make this problem novel and distinct.
First, the number of transactions is unbounded, and an ordering must be defined
over a countably infinite set.  And second, decisions must be made quickly and with only partial information.
Additionally,
some faulty replicas might misreport their observations;
the influence of faulty replicas on the output should be well understood.

Prior work studies a ``$\gamma$-batch-order-fairness'' property,
which divides an ordering into contiguous batches.
If a $\gamma$ fraction of replicas receive a transaction $\tx$ before another transaction $\txp$, then
$\txp$ cannot be in an earlier batch than $\tx$.  
This definition holds vacuously, so we strengthen it 
to require that batches have minimal size,
while accounting for faulty replicas.

This lens gives a protocol with 
both strictly stronger fairness and better liveness properties
than prior work.
We specifically adapt the Ranked Pairs \cite{tideman1987independence}
method to this streaming setting.
This algorithm can be applied on top of any of the communication protocols (in various network models)
of prior work for immediate liveness and fairness improvements.
Prior work relies on a fixed choice of $\gamma$ and a bound on the number of faulty replicas $f$,
but we show that Ranked Pairs satisfies our definition
for every $\frac{1}{2}<\gamma\leq 1$ simultaneously and for any $f$, where fairness
guarantees degrade as $f$ increases.

\end{abstract}
\fi


\maketitle

\if\llncs1
\begin{abstract}

\end{abstract}
\fi

\if\ieee1

\begin{abstract}

\end{abstract}

\fi

\section{Introduction}

We study the problem of ordering transactions in the widely-used replicated state machine architecture (e.g., 
\cite{ongaro2014search,lamport2001paxos,yin:hotstuff,oki1988viewstamped}).
In the standard setting, each of a set of $n$ distinct \textit{replicas}
maintains a copy of a \textit{state machine}.
Replicas
communicate to agree on a totally-ordered log of transactions $\tx_1 \before \tx_2\before\tx_3\before\dots$,
and then each replica applies the transactions in this order to its local copy of the state machine.
An (unbounded) set of clients create and broadcast new transactions.
This architecture famously underlies many of today's blockchains,
but also underlies many traditional (centralized) services, such as bank infrastructure.

This architecture requires that all replicas apply transactions in the same order.
In a bank example,
a client $A$ with a \$1 balance might create two transactions, one which sends a \$1 payment to client $B$
and one which sends a \$1 payment to client $C$.  Only one of payments can execute successfully 
(assuming $A$'s balance is not allowed to overdraft) 
so if different replicas apply these transactions in different orders,
then they will disagree on the balances of clients $B$ and $C$.

There are many communication protocols through which replicas can agree on \textit{some} total order 
(solving the ``total order broadcast'' problem; \citet{broadcastsurvey} gives a survey).
Yet in many systems, most notably within today's public blockchains,
significant financial value can be derived from ordering transactions in specific ways \cite{daian2020flash}.
These systems must therefore agree on not just any ordering but an ``optimal'' one, for some notion of optimal.

The key observation for this work is that this problem is a novel streaming variation of the classic
preference aggregation problem of social choice theory \cite{hagele2001lulls,llull,colomer2013ramon,condorcet1785essay,arrow1950difficulty}.
Prior work on this problem \cite{kelkar2020order,themis,cachin2022quick,aequitaspermissionless} 
use very different network communication protocols, but ultimately each produces agreement on a ``vote'' from each
replica on the ordering of a set of transactions (in social choice parlance, a ``ranking'' on a set of ``candidates'').
Agnostic to the choice of network protocol,
how should a system aggregate a set of proposed orders (``rankings'') into a single, total order?

Two key differences separate this problem from the classic social choice setting.
First, the number of transactions (``candidates'') is countably infinite
(as clients can continually send new transactions for an unbounded time).
And second, the system must produce the output ranking in a streaming fashion.
It cannot wait to see the reported orders over the entire (infinite) set of
transactions before making a decision on the relative ordering of two transactions.
Instead, the system must produce as output an append-only, totally-ordered log of transactions (to be given to the state machine).
Furthermore, the system should ideally minimize the delay between when a client sends a transaction
and when that transaction is appended to the output log (i.e. the system needs ``liveness'').

While we believe this problem to be interesting in its own right,
studying it directly through the lens of classic social choice theory,
enables us to develop an ordering algorithm
 with both much stronger order ``fairness'' guarantees (the key property studied in prior work, discussed below)
 and stronger liveness guarantees
than all of the prior work.  This algorithm could be deployed on top of any of the network protocols
of prior work.

One additional consideration is that replicas might strategically adjust their reported orderings.
Prior work \cite{kelkar2020order,themis,cachin2022quick} writes that ``honest'' replicas
must report transactions in the order in which they arrive over a network,
and any other behavior is ``faulty.''
One desirable property of an order aggregation rule is that the influence of a (colluding) subset of faulty replicas is 
precisely bounded. 

There is a wide body of social choice literature on this underlying aggregation problem.
Our goal with this work is to demonstrate an application for our streaming version of the problem,
and to demonstrate the value of using social choice results in this application
by targeting the precise desiderata raised in prior work.  
There are many other natural desiderata,
notions of fairness,
and aggregation rules that may be practically useful. 
The application of social-choice style aggregation rules in this streaming setting poses a number of interesting open questions. 
For example, if replicas have distinct financial motivations, or accept bribes from clients to order transactions in specific ways,
is there an aggregation rule that maximizes (perhaps approximately) social welfare?
For a reader coming from social choice theory,
``transaction'' could be replaced by ``candidate'', and ``an ordering vote'' by ``a ranking.''

\subsection{Our Results}

We first observe that for an ordering method to be well-defined on a countably infinite set,
it suffices for there to exist a ``monotonic'' and ``asymptotically live'' algorithm that implements the ordering method
on finite, initial segments of the input.  This algorithm must not undo a decision
when its (finite) input is extended (''monotonicity'') and must eventually output every transaction (''asymptotic liveness'').
The core challenge in building such an algorithm is in determining when an algorithm has enough 
information to make a decision that is consistent with its hypothetical output on any extension of its input.

Prior work of \citet{kelkar2020order} defines a notion of ``$\gamma$-batch-order-fairness'' (reproduced here as Definition \ref{defn:gammabatch}).
The output ordering is divided into batches, and if $\gamma n$ replicas receive a transaction $\tx$ before another transaction $\txp$,
then $\tx$ cannot be in a later batch than $\txp$.  However, transactions within a batch can be ordered arbitrarily.
Unfortunately, this definition is vacuously satisfiable by an arbitrarily large batch.  We first strengthen this definition
in \S \ref{sec:orderdefn} to precisely capture the intuition that batches should be \textit{minimal}.
Minimality must be defined carefully in the presence of faulty replicas;
we show that
$\gamma$-batch-order-fairness cannot be satisfied simultaneously with exact notions of batch minimality 
and faulty replicas.

We call our definition $(\gamma, \delta)$-minimal-batch-order-fairness (Definition \ref{defn:minimalbatch}).
Simply put, if a $\frac{1}{2}<\gamma\leq 1$ fraction of replicas vote for $\tx$ before $\txp$ (henceforth; $\tx \before \txp$)
 then the output ordering should include $\tx\before\txp$
unless there is a sufficiently strong reason to put $\txp\before\tx$---that is, a sequence of transactions 
$\left(\tx_1,\tx_2,\ldots,\tx_{k-1},\tx_k\right)$, with $\tx_1=\tx$ and $\tx_k=\txp$,
where at least a $\gamma-2\delta$ fraction of replicas vote for $\tx_i\before\tx_{i+1}$ for $1\leq i \leq k-1$.
Motivating this definition is the fact that a $\delta$ fraction of faulty replicas can, at most, 
increase or decrease the fraction of replicas that report $\tx\before\txp$ by $\delta$.

As an example, given $f$ faulty replicas
out of $n$ total replicas and a fixed $\gamma$,
the protocol of \citet{kelkar2020order}
 satisfies
$(\gamma, \frac{f}{n})$-minimal-batch-batch-order-fairness (Lemma \ref{lemma:aeq_batch}).
Unfortunately, that protocol relies on an explicit choice of a parameter $\gamma$ and an explicit bound on $f$, and
it is not at all clear whether a higher or lower parameter $\gamma$ gives a stronger guarantee (\S \ref{apx:orderexamples} gives an example).
By contrast, the algorithm we give here achieves $(\gamma, \frac{f}{n})$-minimal-batch-order-fairness for every $\frac{1}{2}<\gamma\leq 1$
\textit{simultaneously},
and for any number of faulty replicas.  Fairness guarantees smoothly degrade as the number of faulty replicas increases.
\footnote{Our algorithms do not require knowledge of a bound on the number of faulty replicas.
This notion of ``faulty'' replica is distinct from notions of ``faulty'' in classical Byzantine-fault tolerant communication protocols, such as \cite{castro1999practical}.}

This work adapts the ordering known as Ranked Pairs \cite{tideman1987independence} to our streaming setting.
Informally, Ranked Pairs operates on a weighted, directed graph, where each vertex is a transaction
and an edge from $\tx$ to $\txp$ has weight equal to the fraction of replicas that report $\tx\before\txp$.
Prior work \cite{kelkar2020order,cachin2022quick}, by contrast, operates on the unweighted, directed graph that is derived
by taking the weighted graph which our work uses,
removing all edges below weight $\gamma$, and dropping the weights on the remaining edges.  This technical difference
is the key to enabling our stronger results.

Analysis of this directed graph gives an important structural lemma about what pieces of information
Ranked Pairs uses to determine its output.  This lemma allows a streaming version of Ranked Pairs to carefully track how uncertainty about as-yet-unseen transactions
propagates throughout Ranked Pairs's algorithm.
Ranked Pairs iterates over edges in order of edge weight.  
An interesting observation is that the tiebreaking rule in this edge ordering critically determines overall liveness.
A worst-case (fixed) rule could cause our streaming algorithm to never output any transactions.
However, we show in \S \ref{sec:ordering:implementation} how to construct a tiebreaking rule in a streaming fashion
that guarantees not only asymptotic liveness but also an explicitly bounded liveness delay.
If, for example, at most $\Delta$ time elapses between
when a transaction $\tx$ is sent and when every honest replica votes on it, for example, 
 our algorithm outputs $\tx$ after a delay of at most $O(n\Delta)$.
 This can be reduced to $O(k\Delta)$, by rounding edge weights to the nearest $\frac{1}{k}$---
although this rounding
reduces the fairness guarantee to $(\gamma, \frac{f}{n} + \frac{1}{2k})$-minimal-batch-order-fairness (again for every $\frac{1}{2}<\gamma\leq 1$ simultaneously,
and any $f$).



\section{Preliminaries and System Model}
\label{sec:ordering:model}

We consider a model in which there are $n$ \textit{replicas} cooperating to develop a total ordering
of transactions. 
Transactions are received over the network from \textit{clients}.
We say that the ordering in which a replica receives transactions is that replica's observed ordering.

\begin{definition}[Ordering Preference]
\label{defn:recv}
An \textit{Ordering Preference} on a (finite or countably-infinite) set of transactions is a total ordering
\footnote{
Isomorphic to a subset of $\omega$}
$\hat{\sigma_i}=\left(\tx_{i_1}\before\tx_{i_2} \before \tx_{i_3}\before \ldots\right)$
\end{definition}

However, at any finite time, each replica can have received only a finite number of transactions.
Replicas periodically submit these finite ``ranking votes'' to a ranking algorithm.

\begin{definition}[Ordering Vote]
A replica $i$ submits to a ranking algorithm
an \textit{ordering vote} 
on a set of $k$ transactions, $\sigma=\left(\tx_{i_1},...\tx_{i_{k}}\right)$.
(where $\tx_{i_{j_1}} \before \tx_{i_{j_2}}$ for $j_1 < j_2$).

We say that $\sigma^\prime$ extends $\sigma$ if $\sigma$ is an initial segment of $\sigma^\prime$
(as a convention, $\sigma$ extends itself).
\end{definition}

\begin{definition}[Ranking Algorithm]
A deterministic ranking algorithm $\mathcal{A}(\sigma_1,\ldots,\sigma_n)$ takes as input an ordering vote from
each replica and outputs an ordering $\sigma$ on a subset of the transactions in its input.
\end{definition}

The output need not include every transaction in the input.

We say that a replica is \textit{honest} if its true, observed ordering always extends its ordering vote,
and if whenever it submits a vote, it contains all transactions that the replica has observed at that time. 
Otherwise, the replica is \textit{faulty}.  
We denote the number of faulty replicas as $f$ (out of $n$), and do not assume a bound or knowledge of $f$.
\footnote{
	Of course, $(\gamma, \frac{f}{n})$-minimal-batch-order-fairness (Definition \ref{defn:minimalbatch}) is only meaningful if $f< \frac{n}{2}$.
}

In the rest of this work, we assume that every replica eventually votes on every transaction.
This may not necessarily hold for faulty replicas.  A precise choice of response to this type of faulty behavior
will depend on the network model and communication protocol, but, informally,
if a replica fails to vote on a transaction within a ``sufficient'' time,
it suffices for the honest replicas to fabricate a vote on behalf of the faulty replica.
\S \ref{sec:instantiation} gives an example construction that satisfies this property,
but note that prior work on this problem (e.g., \cite{kelkar2020order,themis,cachin2022quick}) implicitly
use the same ideas to handle unresponsive replicas.

\section{Defining Fair Ordering}
\label{sec:orderdefn}

\subsection{Batch Order Fairness}

Our discussion here follows the notion of ``fairness'' discussed in prior work (e.g., \cite{kelkar2020order,themis,cachin2022quick,kiayias2024ordering}).
The natural notion of ``fairness'' in prior work, which we follow here,
is that if ``most'' replicas report a transaction $\tx$ before another transaction $\txp$, then the protocol outputs $\tx$ before $\txp$.
Making this notion precise requires accounting for what are known as Condorcet cycles.
A (super)majority might report $\tx$ before $\txp$, another majority reports $\txp$ before $\txpp$, and a third majority reports
$\txpp$ before $\tx$.

Aequitas \cite{kelkar2020order} sidesteps this problem by outputting these Condorcet cycles in ``batches,'' which 
leads to the following definition (paraphrased from \cite{kelkar2020order}).

\begin{definition}[$\gamma$-batch-order-fairness]
\label{defn:gammabatch}
Suppose that $\tx$ and $\txp$ are received by all nodes. 
If $\gamma n$ nodes received $\tx$ before $\txp$, then a ranking algorithm never outputs $\txp$ in a later batch than $\tx$.
\end{definition}

In other words, $\tx$ and $\txp$ could be in the same batch.  Transactions within a batch are ordered \textit{arbitrarily}.
Subsequent work \cite{themis,cachin2022quick,kiayias2024ordering} follows this same definition (or a restatement of it).

There are two key problems with this definition.
First, this definition does not preclude a ranking algorithm from putting all transactions into one batch
(and therefore ordering transactions arbitrarily).  For this definition to be meaningful,
it needs a notion of ``minimality'' to batch sizes.
We address this problem by strengthening the definition to include a notion of ``minimality'' (Definition \ref{defn:minimalbatch}).
Some of the prior work \cite{kelkar2020order,cachin2022quick,kiayias2024ordering}, but not all \cite{themis}, implicitly satisfies this definition.

Second, it is not clear whether a higher or lower $\gamma$ gives stronger fairness properties, yet 
all of the prior work requires a fixed parameter choice.
Indeed, there are simple examples (with no faulty replicas) 
where Definition \ref{defn:minimalbatch} only implies ordering restrictions
at high values of $\gamma$, others at only low values, and still others at only intermediate values (\S \ref{apx:orderexamples}).
It is not the case that satisfying Definition \ref{defn:minimalbatch} for a low value of $\gamma$ implies 
satisfying the definition for a high value of $\gamma$.
We address this problem by constructing a protocol that satisfies Definition \ref{defn:minimalbatch} 
for all $\frac{1}{2}\leq \gamma\leq 1$ simultaneously.

\subsection{Batch Minimality}
\label{sec:minimality}

To construct a precise definition, we start with a weighted dependency graph
over a set of transactions.

\begin{definition}[Ordering Graph]
\label{defn:orderinggraph}
Suppose that replicas report ordering votes $(\sigma_1,...,\sigma_n)$ on a set of transactions $V$.
The \textit{Ordering Graph} $G(\sigma_1,...\sigma_n)$ is a complete, weighted directed graph with vertex set $V$
and, for each transaction pair $(\tx, \txp)$,
an edge of weight $w(\tx, \txp)=\alpha$ if $\alpha n$ replicas report $\tx\before\txp$.
\end{definition}


Any notion of ``minimal'' should at least require that when there are no cycles in ordering dependencies, the output is
consistent with every ordering dependency
(as in Theorem IV.1 part 1, Themis \cite{themis}).  Exact $\gamma$-minimality generalizes this notion to the case where cycles in ordering
dependencies exist.

\begin{definition}[Exact $\gamma$-minimality]
\label{defn:strongminimal}
For any pair of transactions $\tx, \txp$, if $\gamma n$ replicas receive $\tx \before \txp$ but $\txp$ is output before $\tx$,
then there exists a sequence of transactions $\txp=\tx_1,...,\tx_k=\tx$ where
$\gamma n$ replicas receive $\tx_i\before \tx_{i+1}$ and $\tx_i$ is output before $\tx_{i+1}$ for all $i$.
\end{definition}

However, this notion of minimality is \textit{impossible} to achieve (for any $\gamma$) in the face of \textit{any} faulty replicas.
The condition on $f$ in Lemma \ref{lem:batch_impossible} is a technical artifact 
of a construction in the proof \aaaielide{(related to the pigeonhole principle)}{}. 
For some parameter settings, such as $n=7$ with $\gamma=\frac{2}{3}$,
the condition admits $f=1$.

\begin{lemmarep}
\label{lem:batch_impossible}
No protocol can achieve $\gamma$-batch-order-fairness with exactly $\gamma$-minimal output batches for any $n$ and $f$ greater than
$n\mod (n-\lceil \gamma n\rceil + 1)$
(for $\gamma > 1/2$).
\end{lemmarep}

In fact, the construction in the proof contains no cycles in the ordering dependency graph of weight $\gamma$,
thereby applying even to a weaker version of Definition \ref{defn:strongminimal} that applies only to the case where the ordering graph,
restricted to edges of weight at least $\gamma$, has no cycles.

\begin{proof}
Define m=$\lceil \gamma n\rceil$, and $k=\lfloor \frac{n}{n-m+1}\rfloor$.
Consider the process of choosing an ordering between $k$ transactions $\tx_1,\tx_2,...,\tx_k$.
Assume that every node submits a vote on transaction ordering (i.e. the theorem
holds even when faulty nodes are required to submit a vote).

An ordering dependency arises, therefore, if and only if there are at least $m$ votes for some $\tx$ before another $\txp$.

Divide the set of nodes into disjoint groups $G_i$ for $1\leq i \leq k$ of size $n-m+1$.
The remainder are the faulty nodes.

Suppose that nodes in group $i$ receive transactions in order $\tx_i \before ... \before \tx_k \before \tx_1 \before ... \before \tx_{i-1}$,
and that the faulty nodes receive the same ordering of transactions as the nodes in group $G_1$.

As such, the transaction pairs $\tx_i, \tx_j$ that receive at least $m$ votes are exactly those with $i<j$,
and there are no cycles in this ordering dependency graph.  Thus, any protocol must output the ordering
$\tx_1,...,\tx_k$.

However, for any protocol that lacks knowledge of which nodes are faulty,
this scenario is indistinguishable from one where
the faulty nodes had received transactions in the ordering received by group $G_2$,
but reported the ordering observed in $G_1$.  As such, the protocol
would necessarily output the ordering observed by $G_1$.

However, in this case, the only correct output would have been $\tx_2,...,\tx_k, \tx_1$,
as only the transaction pairs $\tx_i, \tx_j$ with $i<j$ for $i\neq 1$ or $j=1$
received $m$ votes for $\tx_i \before \tx_j$. \safeqed{}
\end{proof}


The core difficulty is that faulty replicas can cause the weight on an edge observed by any algorithm to be different from the true,
ground truth weight. 
Intuitively,
if a protocol has a reliable communication layer (i.e. every honest replica is able to submit a vote on its ordering preference),
then the worst that a faulty replica can do is to misreport its ordering preferences.  If there are $f$ faulty replicas, then
the faulty nodes can artificially reduce or increase the fraction of replicas that vote for $\tx \before \txp$ by at most $\frac{f}{n}$.
As a protocol cannot distinguish a faulty replica from an honest one by its votes,
we therefore must take some error in edge weights into account in our definition of minimality. 

\begin{definition}[$(\gamma,\delta)$-minimal-batch-order-fairness]
\label{defn:minimalbatch}
An ordering is $(\gamma, \delta)$-minimally-batch-order-fair if, for any transaction pair $(\tx, \txp)$
that is received in that order by at least $\gamma n$ replicas but output by the protocol in the reverse ordering,
then there is a sequence of transactions $\txp = \tx_1,...,\tx_k=\tx$ where at least $(\gamma-2\delta)n$ replicas
receive $\tx_i\before \tx_{i+1}$ and $\tx_i$ is output before $\tx_{i+1}$.
\end{definition}

Definition \ref{defn:minimalbatch} captures the notion that a protocol cannot distinguish between
a $\delta$-fraction of replicas misreporting a transaction ordering.  Given this indistinguishable $\delta$ fraction,
the protocol outputs minimally-sized batches.  Definition \ref{defn:strongminimal} corresponds to the case of $\delta=0$.

This definition does not explicitly discuss ``batches'' or ``minimality,'' but approximately minimal batches (the strongly connected components of Lemma \ref{lemma:batchlang}) can be recovered 
from any ordering satisfying it.  The second condition of Lemma \ref{lemma:batchlang} limits the size of output batches.

\begin{lemmarep}
\label{lemma:batchlang}
Suppose that an output ordering satisfies $(\gamma, \delta)$-minimal-batch-order-fairness.
Compute the ordering graph $\hat{G}$, drop all edges with weight below $\gamma-\delta$, and compute the strongly connected components 
of the remainder.
\begin{itemize}
    \item If $\gamma n$ replicas received $\tx \before \txp$, then either $\tx$ and $\txp$ are in the same strongly connected component,
    or all transactions in the component containing $\tx$ are output before any transactions in the component containing $\txp$.

    \item If $\gamma n$ replicas receive $\tx \before \txp$ and there is no sequence of transactions 
    $\txp = \tx_1,...,\tx_k=\tx$ where at least $(\gamma-2\delta)n$ replicas
    receive $\tx_i\before \tx_{i+1}$, then all transactions in the component containing $\tx$ are output before any transactions in the component containing $\txp$.
\end{itemize}
\end{lemmarep}

\begin{proof}
If at least $\gamma n$ replicas receive $\tx\before \txp$, then the edge $(\tx, \txp)$ is included in the thresholded ordering graph,
and if less than $(\gamma - 2\delta)$ replicas receive $\tx \before \txp$, then the edge is not included.

Thus, if at least $\gamma n$ replicas receive $\tx\before\txp$, then either $\tx$ and $\txp$ are in the same strongly connected component,
or there is an edge from the component containing $\tx$ to that containing $\txp$.  And if, additionally, there is no sequence of transactions
from $\txp$ to $\tx$ as in the lemma statement, then $\tx$ and $\txp$ must be in different components.

If there is an edge from one strongly connected component to another, then all transactions in the first component must be output
before any in the second (or else a violation of Definition \ref{defn:minimalbatch}) would occur.
\safeqed{}
\end{proof}

There is one subtle difference between this definition and 
the requirement for an explicit sequence of output batches in prior work (e.g., in \cite{kelkar2020order,themis}).
When there are two disjoint strongly connected components with no dependencies (of strength at least $\gamma$)
from one to the other, Definition \ref{defn:minimalbatch} allows the output to interleave these components.
This may be strictly required (\S \ref{apx:interleave}).

\subsection{Comparison to Prior Work}
\label{sec:priorcomparison}

Before constructing our protocol, we first apply Definition \ref{defn:minimalbatch} to related prior work.  Aequitas \cite{kelkar2020order}
is instantiated with a choice of parameter $\gamma$ and a bound $f$ on the number of faulty replicas. 

\begin{lemmarep}
\label{lemma:aeq_batch}
Aequitas \cite{kelkar2020order} with parameters $(\gamma, f)$ achieves $(\gamma, \frac{f}{n})$-minimal-batch-order-fairness.
\end{lemmarep}

Note that Aequitas requires $\gamma - \frac{1}{2} > \frac{2f}{n}$.  Lemma \ref{lemma:aeq_batch} is a stronger version of
 Theorem 6.6 of \cite{kelkar2020order}.

\begin{proof}
Aequitas considers a transaction pair $(\tx, \txp)$ as an ordering dependency if at least $\gamma n - f$ report
receiving $\tx$ before $\txp$ (part 3.b.ii of \S 6.2 of \cite{kelkar2020order}).  
This ensures that the protocol considers every ordering dependency
that truly receives at least $\gamma n$ votes, and might include any edge that receives at least $\gamma n - 2f$ votes.

Aequitas sequentially outputs minimal cycles in its computed graph of ordering dependencies (not Definition \ref{defn:orderinggraph}),
ordering within each cycle arbitrarily.
This means that if some pair that receives at least $\gamma n$ votes is output in reversed order,
then it must be part of a cycle of dependencies, and each of these dependencies must have received at least $\gamma n - 2f$ votes in order
to be included in Aequitas's computed graph. \safeqed
\end{proof}

Broadly speaking, Aequitas, along with the protocols of \citet{cachin2022quick} and \citet{kiayias2024ordering},
follow the same pattern.  Through (very different) communication protocols,
they come to agreement on, for each transaction pair $\tx$ and $\txp$, the fraction of replicas that report $\tx$ before $\txp$.
These protocols choose (in advance) a fixed parameter $\gamma$, and then drop from consideration ordering dependences of strength less
than $\gamma$.  This is equivalent to constructing the ordering dependency graph of Definition \ref{defn:orderinggraph},
but then dropping all edges of weight less than $\gamma$ and then forgetting all of the weights on the remaining edges.  The strongly connected
components of this graph form the batches output in Aequitas.

Aequitas (as well as the protocols of \citet{cachin2022quick} and \citet{kiayias2024ordering}) also is not asymptotically live 
(Definition \ref{defn:asymptotic_liveness}).  The strongly connected components of this unweighted dependency graph
can be arbitrarily large, and these protocols wait until they observe the entirety of a component before outputting any transaction
in it.  An exception is one version of the protocol in \citet{kiayias2024ordering}, which adds timestamps to transactions
and (given precise network assumptions) can output a component in a streaming fashion.

By contrast, Ranked Pairs (Theorem \ref{thm:rp_fair}) achieves $(\gamma, \frac{f}{n})$-minimal-batch-fairness for every $\gamma$ simultaneously,
and for any $f$ (our work does not require knowledge of a bound on $f$).  Our protocol gets these stronger results by using all of the information 
available in the problem input.  Additionally, Ranked Pairs guarantees a bounded liveness delay (although precise end-to-end results depend also on network conditions).

Themis \cite{themis} claims (Theorem IV.1) that its output can be partitioned into ``minimal'' batches,
which appears to contradict Lemma \ref{lem:batch_impossible}.
However, Themis uses a much weaker notion of ``minimal.''

Where $\gamma$-batch-order-fairness would create a dependency for $\tx$ not after $\txp$ if at least $\gamma n$ replicas
receive $\tx$ before $\txp$, 
the relaxation implied by Definition III.1 of \cite{themis} creates such a dependency if at least $n(1-\gamma) + 1$ receive $\tx$ before $\txp$
(that is, it cannot be the case that more than $\gamma n$ receive $\txp$ before $\tx$).
Theorem IV.1 of \cite{themis} considers its output batches ``minimal'' if they are singletons when there are no cycles of these dependencies.
This notion of minimality is much weaker than what we discuss here, and does not provide any 
guarantee when there is significant disagreement between replicas on 
transaction orderings.  Lemma \ref{lemma:inverted} gives an example highlighting this distinction.

\begin{lemmarep}
\label{lemma:inverted}
In Themis \cite{themis}, even if all nodes report $\tx$ before $\txp$ and there are no faulty replicas
nor cycles in ordering dependencies (as defined in Definition \ref{defn:gammabatch}) for any $\gamma>1/2$,
 the output may put $\txp$ before $\tx$.
\end{lemmarep}

This kind of counterintuitive behavior is a possible output of Themis's protocol.  
When two transactions $\tx, \txp$ receive
$n-f$ votes in one round of Themis, it determines whether to consider as an ordering dependency $\tx \before \txp$ or $\txp \before \tx$ 
by \textit{majority vote}.  Equivalently, Themis as a protocol operates in the regime of $\gamma=\frac{1}{2}$, regardless
of a chosen parameter value 
(the choice of $\gamma$ does affect how Themis considers 
transactions reported by some but not all replicas, which does not affect this discussion).

\begin{proof}

Suppose that the number of nodes is even, and break the nodes into two groups.
Group one receives transactions $\tx_1 \before \tx_2 \before \tx_3$, while group two 
receives transactions $\tx_3 \before \tx_1 \before \tx_2$.

All nodes submit their local orderings to the protocol correctly.  Themis proceeds in rounds;
the round under consideration consists only of these three transactions.

Themis builds a graph of what it considers ordering dependencies on the transactions in a round
(Figure 1, part 1, \cite{themis}).
Let $w(\tx, \txp)$ be the number of replicas that vote
for $\tx$ before $\txp$.

An ordering dependency between $\tx$ and $\txp$ is included if (1) $w(\tx, \txp) \geq w(\txp, \tx)$
and (2) $w(\tx, \txp) \geq n(1-\gamma) + f + 1$.  Themis assumes $n>4f/(2\gamma -1)$, or equivalently,
$2f < n (\gamma - 1/2)$, so $(1-\gamma) + (f+1)/n \leq (1-\gamma) + (2f)/n \leq (1-\gamma) + (\gamma-1/2) \leq 1/2$.
Ties are broken in an unspecified (deterministic) manner.

As such, (depending on the tiebreaking), Themis may consider as valid ordering dependencies the pairs $(\tx_1, \tx_2)$,
$(\tx_2, \tx_3)$, and $(\tx_3, \tx_1)$.  Within a strongly connected component of its dependency graph, Themis computes a Hamiltonian cycle,
then outputs transactions by walking arbitrarily along that cycle.  As such, a possible output of Themis is the ordering
$\tx_2 \before \tx_3 \before \tx_1$.

Thus, all nodes received $\tx_1$ before $\tx_2$, and there were no cycles of ordering dependencies for any $\gamma >1/2$, but the protocol could
output $\tx_2$ before $\tx_1$.

Note that Themis only waits for $n-f$ votes from replicas before proceeding, not the full $n$.
This difference is immaterial if $f=0$.
However, the construction above works only if the numbers of votes for 
$\tx_3 \before \tx_1$ and for $\tx_2 \before \tx_3$ are both at least $n(1-\gamma) + (f+1)$.
Tighter analysis shows that 
$n(1-\gamma) + (f+1) = n(1-\gamma) + (2f) - (f-1) \leq n(1-\gamma) + n(\gamma - 1/2) - (f-1) = n/2 - (f - 1) < (n-f)/2$, so the construction still works.

For $\gamma$ larger than $1/2 + \varepsilon$ for some small $\varepsilon$ and sufficiently large $n$,
this construction could be repeated with more transactions in the cycle.
This eliminates the need to abuse a tiebreaking rule when votes are evenly divided. \safeqed
\end{proof}

\section{Ranked Pairs Voting}
\label{sec:rpv}

Our protocol is based on the Ranked Pairs method \cite{tideman1987independence}.  
Ranked Pairs (Algorithm \ref{alg:rpv}, reproduced in \S \ref{apx:algorithms})
first builds the ordering graph (Definition \ref{defn:orderinggraph}),
then iterates through edges in order of edge weight (breaking ties arbitrarily).
It builds an acyclic set of edges greedily, adding each new edge so long as it does
not create a cycle with edges already in the set.  The output ordering
is the result of topologically sorting the resulting acyclic graph.

\begin{theoremrep}
\label{thm:rp_fair}
Given a ordering vote (on every transaction in a finite set) from every replica,
Ranked Pairs Voting simultaneously satisfies $(\gamma, \frac{f}{n})$-minimal-batch-order-fairness
for every $\gamma$, and does not depend on any fixed bound on $f$.
\end{theoremrep}

Lemma \ref{lemma:batchlang} and Theorem \ref{thm:rp_fair} together imply an important property.
Ranked Pair's output
can be divided into batches consistent with $\gamma$-batch-order-fairness for \textit{any} $\gamma$
(subject to the interleaving discussed in \S \ref{sec:minimality}).
There is no need to choose a $\gamma$ or arbitrarily order transactions within a batch.

\begin{proof}

Ranked Pairs Voting ensures that if some edge of weight $\gamma$ is not included in the output graph $G$ --- that is to say,
if $\gamma n$ nodes vote for $\tx$ before $\txp$, but the output ordering has $\txp \before \tx$ --- then there must be a directed path of edges
in $G$ already from $\txp$ to $\tx$.  As the algorithm looked at these edges before the current edge,
these edges must have weight (as observed by the algorithm) at least $\gamma$.

Note that if there are $f$ faulty nodes, these nodes can, at most, adjust the weight (observed by the algorithm) on any particular transaction pair
by at most $\frac{f}{n}$.

Thus, an edge with true weight $\gamma$ is only reversed if there exists a path in the ordering graph
(that is included in $G$)
in the opposite direction
of minimum true weight at least $\gamma - \frac{2f}{n}$.

Nowhere in the algorithm do its choices depend on $\gamma$ or $f$ (or $n$). \safeqed
\end{proof}


\section{Streaming Ranked Pairs}
\label{sec:streaming_rpv}

We now turn to the streaming setting that is the focus of our work.
There may be an infinite set of transactions to consider overall, but at any finite time, 
an algorithm must compute an ordering on a subset of the transactions that have been seen thus far.
In a real-world system, such an algorithm would be run with some periodic frequency,
and replicas would periodically append to their ordering votes.

An algorithm must be \textit{monotonic} (Definition \ref{defn:mono})---%
if replicas extend their ordering votes, the algorithm, when run again on the extensions,
can only extend its prior output.

\begin{definition}[Monotonicity]
\label{defn:mono}
A sequencing algorithm $\mathcal{A}(\cdot,\ldots,\cdot)$ is \textit{monotonic}
if, given two sets of ordering votes $\left(\sigma_1,\ldots,\sigma_n\right)$
and $\left(\sigma^\prime_1,\ldots,\sigma^\prime_n\right)$,
such that $\sigma^\prime_i$ extends $\sigma_i$ for all $i\in [n]$,
$\mathcal{A}(\sigma^\prime_1,\ldots,\sigma^\prime_n)$ extends $\mathcal{A}(\sigma_1,\ldots,\sigma_n)$.
\end{definition}

If a sequencing algorithm is monotonic,
it implies a well-formed definition for aggregating 
a set of orderings on countably infinite sets of transactions,
not just on finite sets.  $\tx$ comes before $\txp$
in the infinite case
if there exists a finite subset of the input orderings
such that the algorithm puts $\tx\before\txp$ in its output.

A non-vacuous condition is also required---the algorithm
that never outputs anything is monotonic.

\begin{definition}[Asymptotic Liveness]
\label{defn:asymptotic_liveness}
A sequencing algorithm $\mathcal{A}(\cdot,\ldots,\cdot)$ is \textit{asymptotically live}
if, given any set of countably infinite ordering votes $(\hat{\sigma}_1,\dots,\hat{\sigma_n})$
and any transaction $\tx$ in those votes,
there exists an $N$ such that when each $\hat{\sigma_i}$ is trunctated to the first $N$
elements of the ordering to produce $\sigma_i$,
$\tx$ is included in $\mathcal{A}(\sigma_1,\ldots,\sigma_n)$.
\end{definition}

For expository simplicity, we present
our algorithm in two steps.  First, we give a streaming algorithm
(Algorithm \ref{alg:stream_rpv}) that is monotonic but not asymptotically live.
Then, we give a modification that ensures asymptotic liveness (Algorithm \ref{alg:stream_rpv_2}).



\subsection{Decisions Are Local}

Implementing Ranked Pairs voting in this streaming setting requires first 
understanding when the algorithm chooses or rejects an edge in the ordering graph.


\begin{lemmarep}
\label{lemma:cycles}
If Ranked Pairs rejects an edge $(\tx, \txp)$ of weight $\alpha$, then it must choose some directed path of edges from $\txp$ to $\tx$, where
each edge has weight at least $\alpha$.
\end{lemmarep}

\begin{proof}
If Ranked Pairs rejects an edge $(\tx, \txp)$ of weight $\alpha$, then adding the edge would have created a cycle among the edges the algorithm had already chosen.  Ranked Pairs looks at edges in order of weight, so all of the edges on the already chosen path from $\txp$ to $\tx$ must have weight
at least $\alpha$.
\end{proof}

\begin{observationrep}
\label{obs:maxweight}
Ranked Pairs chooses every edge of weight $1$.
\end{observationrep}

\begin{proof}
There cannot be any cycles of edges, all of weight $1$.  Such a situation would require every replica to submit a cycle of preferences,
which is impossible, given that each ordering vote $\sigma$ is a total, linear order. \safeqed
\end{proof}




Furthermore, the set of vertices that such a path can visit are bounded to a local neighborhood of $\tx_A$ and $\tx_B$.
\begin{definition}
Let $\tx$ be any transaction.
\begin{enumerate}
  \item Let $P_{\tx}$ be the set of all \textit{preceeding} transactions; that is, all $\txp$ with $w(\txp, \tx)=1$.
  \item Let $Q_{\tx}$ be the set of all \textit{concurrent} transactions; that is, all $\txp$ with $0<w(\txp, \tx)<1$.
  \item Let $R_{\tx}$ be the set of all \textit{subsequent} transactions; that is, all $\txp$ with $w(\txp, \tx)=0$.
\end{enumerate}
\end{definition}


\begin{lemmarep}
\label{lemma:local}
If ranked pairs chooses all edges in a path from $\tx_B$ to $\tx_A$,
no transactions on the path are in $R_{\tx_A}$ or $P_{\tx_B}$.
\end{lemmarep}

\begin{proof}
If the path visits a transaction $\tx$ in $R_{\tx_A}$, then it creates a cycle with the edge from $\tx_A$ to $\tx$
(which ranked pairs must choose, by Observation \ref{obs:maxweight}).
If the path visits a transaction $\tx$ in $P_{\tx_B}$, then it creates a cycle with the edge from $\tx$ to $\tx_B$.
\safeqed
\end{proof}

Lemma \ref{lemma:local} lets the streaming algorithm decide whether to choose an edge given only a finite amount of
information.
Once the algorithm sees a vote for a transaction $\tx$ from every replica, it can compute the weight in the ordering graph
for every edge $(\tx, \txp)$ and $(\txp, \tx)$.  
Unseen transactions must be in $R_{\tx}$.

Even so, a streaming algorithm cannot always know whether or not Ranked Pairs would choose an edge.  
As such, we allow our algorithms to leave an edge in an ``\textit{indeterminate}'' state, and design
our algorithms to account for indeterminate edges when considering subsequent edges.

First, we construct an ordering graph that captures the information available at a finite time.

\begin{definition}[Streamed Ordering Graph]~\\
\label{defn:streamedgraph}
Suppose that each replica submits an ordering vote $(\sigma_1,...,\sigma_n)$.
Let $V$ be the set of all transactions that appear in each vote, and let $\hat{v}$ be a new vertex (the ``future'')
The \textit{Streamed Ordering Graph} is a weighted, directed, complete graph $\hat{G}=(V^\prime,E)$
on vertex set $V\cup \lbrace \hat{v}\rbrace$.

For each $\tx$ and $\txp$, set weights $w(\tx, \txp)$ and $w(\txp, \tx)$ as in Definition \ref{defn:orderinggraph}.
Set $w(\tx,\hat{v})=1$ for all $\tx$.  
If there exists $\txp$ that appears in the votes of some
but not all replicas and which preceeds $\tx$ in at least one replica's vote, set $w(\hat{v}, \tx)=1$, and otherwise $w(\hat{v}, \tx)=0$.
\end{definition}

Conceptually, the ``future'' vertex $\hat{v}$ represents all transactions that have not received votes from all replicas.
Implicitly, all edge weights not computable from the available information are upper-bounded by $1$.
\aaaielide{%
An implementation might compute a tighter upper bound on the weights of edges $(\hat{v}, \tx)$.  Subsequent arguments require only that
the assigned weight is an upper bound of the true weight.}{}

\subsection{A Streaming Algorithm}
\label{sec:streaming_alg}

We can now construct a streaming algorithm to compute Ranked Pairs.
Instead of either including or rejecting every edge,
Algorithm \ref{alg:stream_rpv} (shown in full in Appendix \ref{apx:algorithms}) has the option to declare an edge as \textit{indeterminate};
that is,
the algorithm cannot determine whether to include or exclude the edge
with the information in the input.

The algorithm proceeds identically to the classic Ranked Pairs algorithm, except regarding how it treats \textit{indeterminate} edges.
For each new edge, it first checks whether it can accept the edge (without creating a cycle of accepted edges),
supposing that all currently \textit{indeterminate} edges are later accepted.
Conversely, it also checks whether it can reject the edge, 
which means checking whether accepting the edge would always create a cycle of accepted edges
(assuming the currently \textit{indeterminate} edges are later rejected).
The algorithm operates on the modified input graph (Definition \ref{defn:streamedgraph}), where the ``future'' vertex
bounds the information the algorithm has to consider about as-yet-unseen transactions.  Informally,
all edges between as-yet-unseen transactions are \textit{indeterminate}, so from the perspective of directed connectivity analysis,
the unseen transactions form a single, strongly-connected component.

To prove correctness of this algorithm, it suffices to show that the output of the algorithm is
monotonic 
and
that the output is consistent with the non-streaming version of Ranked Pairs.
By consistency, we specifically mean that at any point, if clients were to stop sending new transactions and all replicas eventually voted
on every replica, Algorithms \ref{alg:rpv} and \ref{alg:stream_rpv} would produce the same output.

\begin{lemmarep}
\label{lemma:same_choices}
Consider a counterfactual scenario where clients stop sending transactions, all replicas eventually receive every transaction,
and every replica includes every transaction in its output vote.

Whenever Algorithm \ref{alg:stream_rpv} includes a \textit{determinate} edge in $H$ (resp. excludes an edge),
that edge is included (resp. excluded) in the output of the non-streaming Ranked Pairs on the counterfactual input.
\end{lemmarep}

\begin{proof}
Suppose that the lemma statement holds for all edges earlier in the Ranked Pairs ordering.
Specifically, assume that Ranked Pairs would choose every higher \textit{determinate} edge,
not choose any higher rejected edge, and might or might not choose any higher \textit{indeterminate} edge (and that all edges of higher weight
are all either already rejected, chosen as \textit{determinate}, or chosen as \textit{indeterminate}).

By Lemma \ref{lemma:local},
Ranked Pairs would choose $(\tx_i, \tx_j)$ if and only if there does not exist a path of already chosen edges higher in the ordering
from $\tx_j$ to $\tx_i$ that does not enter $R_{\tx_i}$ or in $P_{\tx_j}$.

If there exists a path of \textit{determinate} edges in $U_{\tx_i,\tx_j}$,
then Streaming Ranked Pairs rejects $(\tx_i, \tx_j)$, and if there does not exist
a path of \textit{determinate} or \textit{indeterminate} edges, 
then Streaming Ranked Pairs accepts $(\tx_i, \tx_j)$.  Otherwise, the edge is left \textit{indeterminate}.

Note that Streaming Ranked Pairs considers edges in the same ordering that Ranked Pairs would (relative to the restricted set
considered).  However, at any point, due to the initialization of $H$ with \textit{indeterminate} edges, 
the set of edges in which the algorithm searches for a cycle might include more edges than those that would be considered
by Ranked Pairs.  However, these extra edges are \textit{indeterminate} and can therefore only make Streaming
Ranked Pairs choose \textit{indeterminate} for a considered edge, satisfying the induction hypothesis.  

Importantly,
the weights of these extra edges upper bound their true (unknown) weights.
And any cycle that would go through (an) as-yet-unseen transaction(s) maps to one (using \textit{indeterminate} edges)
through the future vertex $v$.  As such, 
the set of paths considered in Streaming Ranked Pairs is always a superset of that considered by Ranked Pairs,
and the difference between these sets is always made up of \textit{indeterminate} edges.

As such, when the streaming algorithm does not choose to leave an edge as \textit{indeterminate},
the streaming algorithm makes the same decisions as the non-streaming algorithm.
Thus, the induction hypothesis holds for $(\tx_i, \tx_j)$.

The induction hypothesis clearly holds for the first edge considered, so the lemma holds.
\safeqed
\end{proof}

The same argument shows that Algorithm \ref{alg:stream_rpv} is monotonic.

\begin{corollaryrep}
\label{cor:stream_mono}
Algorithm \ref{alg:stream_rpv} is monotonic.
\end{corollaryrep}

\begin{proof}

The proof of Lemma \ref{lemma:same_choices} actually shows that an edge is marked \textit{determinate} or rejected only if Ranked Pairs
would make that decision on the edge on \textit{any} counterfactual scenario that extends the input to the algorithm.

\safeqed
\end{proof}

Lemma \ref{lemma:same_choices} directly implies that the output of Algorithm \ref{alg:stream_rpv} matches that of Ranked Pairs.

\begin{theoremrep}
\label{thm:output_correct}
The output of Algorithm \ref{alg:stream_rpv} matches an initial segment of the ordering output by the non-streaming Ranked Pairs
(on the counterfactual of Lemma \ref{lemma:same_choices}).
\end{theoremrep}

\begin{proof}
Note that every transaction bordering on an indeterminate edge must come strictly after (in the topological ordering)
every transaction in the output of the algorithm.

By Lemma \ref{lemma:same_choices}, then regardless 
of whether or not the indeterminate edges are chosen in Ranked Pairs, 
every transaction bordering an indeterminate edge
must come after (in the true output of Ranked Pairs) all of the transactions output by the algorithm.

Within the set of output transactions, again because of Lemma \ref{lemma:same_choices}, transactions must be ordered
according to the true output of Ranked Pairs. \safeqed
\end{proof}

Ranked Pairs has the property that, for any set of inputs $(\sigma_1,\ldots,\sigma_n)$, if 
it is run on a restriction of the inputs to the transactions
that appear as an initial segment of its output on $(\sigma_1,\ldots,\sigma_n)$,
its output is unmodified on the restricted inputs (for completeness, we give a proof in \S \ref{apx:truncation_proof}).
As Algorithm \ref{alg:stream_rpv} is monotonic, 
its output exactly matches the Ranked Pairs ordering
when the input votes are restricted in this way.

\section{Liveness and Efficiency}
\label{sec:ordering:implementation}

\begin{algorithm}
\caption{Streaming Ranked Pairs}
\label{alg:stream_rpv_2}
\KwIn{An ordering vote from each replica $\lbrace \sigma_i \rbrace_{i\in [n]}$}
\KwIn{$\hat{H}$, the graph computed in the preceeding invocation of Algorithm \ref{alg:stream_rpv_2}}
\aaaielide{
$\hat{G}=(V,E,w) \gets \hat{G}(\sigma_1,...,\sigma_n)$ \tcc*[r]{Compute streamed ordering graph}
$H \gets (V,\emptyset)$ \tcc*[r]{Initialize empty graph on same vertex set}}
{
  $\hat{G}=(V,E,w) \gets \hat{G}(\sigma_1,...,\sigma_n)$, $H\gets (V, \emptyset)$\\
}
\ForEach{edge $(\tx, \hat{v})$ and $(\hat{v}, \tx)$ with $w(e)=1$}{
  Add $e$ to $H$, and mark it as \textit{indeterminate}
}
Add all \textit{determinate} edges of $\hat{H}$ to $H$, marked \textit{determinate}\\
\ForEach{$\gamma$ in $(1,\frac{n-1}{n},\frac{n-2}{n},\ldots,0)$}
{
  $E_\gamma \gets \lbrace e\in E ~\vert w(e)=\gamma\rbrace$, ordered arbitrarily\\
  \Repeat{$E_\gamma$ is empty or no progress is made in one full loop over $E_\gamma$}{
    $e=(\tx_i, \tx_j) \gets$ first element of $E_\gamma$\\
    $U_{\tx_i, \tx_j} \gets (V\setminus (R_{\tx_i}\cup P_{\tx_j}))\cup \lbrace \hat{v}\rbrace$\\
    If there is a directed path in $H$ restricted to $U_{\tx_i,\tx_j}$
        from $\tx_j$ to $\tx_i$
        where every edge is \textit{determinate}, then do not include the edge in $H$.  Remove the edge from $E_\gamma$.\\
    Else if there is no directed path in $H$ restricted to $U_{\tx_i,\tx_j}$
        from $\tx_j$ to $\tx_i$
        where every edge is either \textit{determinate} or \textit{indeterminate},
        add $(\tx_i, \tx_j)$ to $H$ and mark it as \textit{determinate}.  Remove the edge from $E_\gamma$. \\
    Otherwise, defer $(\tx_i, \tx_j)$ to the end of the ordering on $E_\gamma$.
  }
  Mark all remaining edges in $E_\gamma$ as \textit{indeterminate}
}
\KwOut{The topological sort of $H$ (with sorting comparisons restricted to \textit{determinate} edges),
up to but not including the first transaction bordering an \textit{indeterminate} edge}
\end{algorithm}

\subsection{Efficient Marginal Runtime}
\label{sec:ordering:oracle}

Instantiating this protocol means running a ranking algorithm repeatedly
(as replicas report additional information).
Rather than recompute the streamed ordering graph and the status of every edge in each invocation,
Algorithm \ref{alg:stream_rpv} could use the output of a past invocation, restricted
to the decisions that were \textit{determinate}, as an oracle.
Corollary \ref{cor:stream_mono} implies that 
\textit{determinate} choices are consistent with any extension of the input,
so this oracle does not change the output.
However, the runtime now depends only on the number of new edges in the streamed ordering graph (\S \ref{sec:oracle_impl}).

\subsection{Protocol Liveness}
\label{sec:liveness}

Algorithm \ref{alg:stream_rpv} is not asymptotically live.  Appendix \ref{apx:non_live}
gives a worst-case example.
The problem is the tiebreaking rule for how Ranked Pairs iterates over
edges of equal weight.
That construction uses a worst-case tiebreaking rule, 
but Ranked Pairs requires only some tiebreaking rule.
Our algorithm can generate a tiebreaking rule dynamically. 
Specifically, when the algorithm visits edges of weight $\gamma=\frac{k}{n}$, 
it defers edges that would become indeterminate
to the end of the ordering among edges of that weight.

Algorithm \ref{alg:stream_rpv_2} thus constructs a (partial) ordering between edges that breaks ties between edges of the same weight.  
The output of this algorithm is, by the same arguments as in \S \ref{sec:streaming_alg}, consistent with the output of non-streaming Ranked Pairs, 
when using an tiebreaking rule consistent with this edge ordering.
The oracle mechanism of
\S \ref{sec:ordering:oracle} passes this tiebreaking information from one run
of Algorithm \ref{alg:stream_rpv_2}
to the next.

\begin{lemmarep}
Whenever Algorithm \ref{alg:stream_rpv_2} includes a \textit{determinate} edge in $H$, or excludes an edge from $H$,
the non-streaming Ranked Pairs, when using the implied ordering, would make the same decision.
\end{lemmarep}

\begin{proof}
The argument proceeds as in Lemma \ref{lemma:same_choices}.  
Algorithm \ref{alg:stream_rpv_2} induces a tiebreaking rule between edges of the same weight.
If Algorithm \ref{alg:stream_rpv} were given this tiebreaking rule,
it would visit edges in the order in which Algorithm \ref{alg:stream_rpv_2} makes decisions on edges
(i.e., skipping
the instances where Algorithm \ref{alg:stream_rpv_2} defers an edge until later).
As such, its output is consistent with the non-streaming Ranked Pairs using this implied ordering (by Theorem \ref{thm:output_correct}).
\safeqed
\end{proof}

This small change enables the following structural lemma.  We start with a useful definition.

\begin{definition}
\label{defn:contemporary}
An edge $(\hat{\tx}, \hat{\txp})$ is \textit{contemporary} to an edge $(\tx, \txp)$ if and only if
$\hat{\tx}$ and $\hat{\txp}$ are contained in $U_{\tx, \txp}$.
\end{definition}


\begin{lemmarep}
\label{lemma:indet_pair}
If Algorithm \ref{alg:stream_rpv_2} marks an edge $(\tx_i, \tx_j)$ with weight $\frac{k}{n}$ as \textit{indeterminate},
there must be a path from $\tx_j$ to $\tx_i$ contained in $U_{\tx_i, \tx_j}$ that contains an \textit{indeterminate} edge of weight at least 
$\frac{k+1}{n}$.
\end{lemmarep}

\begin{proof}
By construction of the algorithm, if an edge $(\tx_i, \tx_j)$ is marked as \textit{indeterminate},
then there must be a path in $U_{\tx_i, \tx_j}$ already chosen (when this edge is visited)
of \textit{determinate} and \textit{indeterminate} edges of weight at least $\frac{k}{n}$.  Furthermore,
any \textit{indeterminate} edge must be of weight strictly more than $\frac{k}{n}$.  When the algorithm looks for such a path,
it has not yet marked any edges of weight $\frac{k}{n}$ as \textit{indeterminate}, and if no such path exists, then the edge
would not be marked as \textit{indeterminate}.
\safeqed
\end{proof}

By contrast, the best bound in Algorithm \ref{alg:stream_rpv} is that there exists a contemporary indeterminate edge of weight 
$\frac{k}{n}$.
Lemma \ref{lemma:indet_pair} implies that for any indeterminate edge, there is sequence of contemporary
indeterminate edges of strictly increasing weights that causes the edge to be indeterminate.

\begin{lemmarep}
\label{lemma:indet_chain}
Suppose Algorithm \ref{alg:stream_rpv_2} marks an edge $e_1$ of weight $\gamma$ as \textit{indeterminate}.
Then there exists a sequence of \textit{indeterminate} edges $e_1,\ldots e_k$
of strictly increasing weight
such that for each pair of edges $e_i=(\tx_i, \tx_j)$ and $e_{i+1} = (\txp_i, \txp_j)$ with $e_{i+1}$ contemporary to $e_i$,
$\txp_i$ and $\txp_j$ are present in $U_{\tx_i, \tx_j}$,
and the source of $e_k$ is $\hat{v}$.
\end{lemmarep}

\begin{proof}
Follows from repeated application of Lemma \ref{lemma:indet_pair}. 

The algorithm starts with only edges leaving the ``future'' vertex $\hat{v}$ marked as \textit{indeterminate}.
These initial edges are the only \textit{indeterminate} edges with weight $1$, and thus are the only ones
that can form the root of any chain.
\safeqed
\end{proof}

None of these chains can be longer than $n$ edges,
as each step increases the weight by at least $\frac{1}{n}$.
This bound implies that Algorithm \ref{alg:stream_rpv_2} eventually outputs every transaction.



\begin{theoremrep}
\label{thm:asymptotic_liveness}
Algorithm \ref{alg:stream_rpv_2} is asymptotically live.
\end{theoremrep}

\begin{proof}

Let $(\hat{\sigma_1},\dots,\hat{\sigma_n})$ be any set of (countably infinite) ordering preferences,
and let $\tx$ be any transaction in those orderings.

Consider the set of all chains of edges $(e_1,\ldots, e_k)$ where $e_{i+1}$ is contemporary to $e_i$ (and of higher weight)
and $e_1$ is adjacent to $\tx$, and $k \leq n$.
 Because each $U_{\tx, \txp}$ is finite,
the number of such chains must be finite,
and the number of transactions that appear in any of these chains must be finite.

On any input to Algorithm \ref{alg:stream_rpv_2} that is a finite truncation of $(\hat{\sigma_1},\dots,\hat{\sigma_n})$,
if an edge adjacent to $\tx$ is left \textit{indeterminate},
then Lemma \ref{lemma:indet_chain} implies that there must be a chain of \textit{indeterminate} edges on this input
with the first adjacent to $\tx$, the last adjacent to $\hat{v}$, and each edge contemporary to the previous.
This chain, with the last edge dropped, must be one of the chains considered above.
As such, there are only a finite number of transactions that could be in the last edge adjacent to $\hat{v}$.
For each such transaction $\txp$, there can only be an edge from $\hat{v}$ to $\txp$ in the streamed ordering graph
if there is some other transaction in the input that is ahead of $\txp$ in at least one replica's vote.
There can only be a finite number of these transactions.

Taking a union over a finite number of finite sets gives a finite set of transactions.  There must therefore be some bound $M_{\tx}$ such
that all of these transactions appear in every replica's vote if $(\hat{\sigma_1},\dots,\hat{\sigma_n})$ is 
truncated to (at least) the first $M_{\tx}$ elements.

A transaction $\tx$ is not output by Algorithm \ref{alg:stream_rpv_2} if the topological sort in the last step
puts a transaction $\txp$ adjacent to an \textit{indeterminate} edge ahead of $\tx$.
But there are only a finite number of transactions that might ever be ahead of $\tx$ (specifically, $P_{\tx}$ and $Q_{\tx}$).

Let $N_\tx=\max_{\txp\in P_{\tx}\cup Q_{\tx}} M_{\txp}$.  Then if $(\hat{\sigma_1},\dots,\hat{\sigma_n})$ is truncated to 
(at least) the first $N_{\tx}$,
then $\tx$ must appear in the output of Algorithm \ref{alg:stream_rpv_2}.
\end{proof}

Additionally, given a network assumption like Assumption \ref{ass:sync}, 
Lemma \ref{lemma:indet_chain} bounds the temporal delay between when a transaction 
is sent by a client and when Algorithm \ref{alg:stream_rpv_2} adds it to its output.

\begin{assumption}[Network Delay Bound]
\label{ass:sync}
If a client sends a transaction $\tx$ at time $t$, all replicas include $\tx$ in
their ordering votes before time $t+\Delta$.
\end{assumption}

As such, replicas disagree on the ordering between $\tx$ and $\txp$ 
only if they were sent at roughly the same time.
Note that Algorithm \ref{alg:stream_rpv_2} does not require knowledge of $\Delta$, and does not depend on Assumption \ref{ass:sync} for correctness.

\begin{observationrep}
\label{obs:adjacenttimes}
Consider any two transactions $\tx_i$ and $\tx_j$
sent at times $t_i$ and $t_j$, respectively.

If $t_i + \Delta < t_j$, then the weight of the edge $(\tx_i, \tx_j)$ is $1$.
\end{observationrep}
\begin{proof}
If $t_i + \Delta < t_j$, then every replica receives and commits to a vote on $\tx_i$ before $\tx_j$ is sent,
so $\tx_j$ must come after $\tx_i$ in every replica's vote. \safeqed
\end{proof}


\begin{lemmarep}
\label{lemma:onedelta}
Consider any three transactions $\tx$, $\tx_i$, and $\tx_j$ received by all replicas,
that were sent by clients at times $t$, $t_i$, and $t_j$, respectively.
Suppose $\tx\in U_{\tx_i, \tx_j}$, and that the weight of $(\tx_i, \tx_j)$ is not $0$.

Then $t_j - \Delta \leq t \leq t_j + 2\Delta$.
\end{lemmarep}

\begin{proof}
By construction, $\tx\notin R_{\tx_i}$, so $t \leq t_i + \Delta$.  Furthermore, $\tx \notin P_{\tx_j}$, so
$t +\Delta \geq t_j$.
Furthermore, by Observation \ref{obs:adjacenttimes}, $t_j + \Delta \geq t_i$.

Thus, $t_j-\Delta \leq t \leq t_i + \Delta \leq t_j + 2\Delta$.
\safeqed
\end{proof}

This lemma and Lemma \ref{lemma:indet_chain} give an overall time bound.

\begin{theoremrep}
\label{thm:liveness}
A transaction $\tx$ is contained in the output of the algorithm of Algorithm \ref{alg:stream_rpv_2}
after at most $(n+1)\Delta$ time.
\end{theoremrep}

\begin{proof}
Let the current time be $T$, and let $\tx$ be sent at time $t$.  Without loss of generality, assume $\tx$ has been voted on by
every replica.

If an \textit{indeterminate} edge exists to $\tx$, then there exists some transaction $\txp$ (sent at $t^\prime$)
that has not been voted on by every replica, but which preceeds $\tx$ in some replicas' votes.
As such, $T - \Delta \leq t^\prime$.  Then, as $\tx$ does not fully preceed $\txp$, it must be the case that $t^\prime - \Delta \leq t$
(so $T-2\Delta \leq t)$.

Lemma \ref{lemma:indet_chain} implies a sequence of \textit{indeterminate} edges of strictly increasing weight from this edge
to one adjacent to $\hat{v}$.

Consider two adjacent edges $(\tx_i, \tx_j), (\txp_i, \txp_j)$ in this chain, send at times $t_i, t_j, t_i^\prime, t_j^\prime$, respectively.
By Lemma \ref{lemma:onedelta}, it must be the case that $t_j-\Delta \leq t_j^\prime$.
Adding this bound over each link in the chain
shows that
$T-2\Delta - (k-1)\Delta - \Delta \leq t$ (where the last $\Delta$ comes from bounding the 
delay of the transaction in the last edge adjacent to $\hat{v}$).

A sequence can be of length at most $n-1$.

As such, after time $(n+1)\Delta$, every edge adjacent to $\tx$ is either rejected or included and marked \textit{determinate}.

To ensure that $\tx$ is included in the output of Algorithm \ref{alg:stream_rpv_2}, it suffices to ensure that
every transaction adjacent to an \textit{indeterminate} edge
must come after $\tx$.
This is guaranteed by waiting for an additional $\Delta$ time. \safeqed
\end{proof}

\subsection{Trading Accuracy for Liveness}

The $O(n\Delta)$ bound in Theorem \ref{thm:liveness} comes from the fact that a chain of indeterminate edges
can have $O(n)$ length, which, in turn, follows from the fact that weights have a granularity of
$\frac{1}{n}$.  Reducing this granularity by rounding weights reduces the maximum length of a chain of indeterminate 
edges.

\begin{lemmarep}
If edge weights are rounded to the nearest $\frac{1}{k}$ in the streamed ordering graph
(before Algorithm \ref{alg:stream_rpv_2} is applied)
then
a transaction $\tx$ is contained in the output
after at most $(k+2)\Delta$ time.
\end{lemmarep}

\begin{proof}
The argument proceeds exactly as in that of Theorem \ref{thm:liveness},
except that the length of the chain of \textit{indeterminate} edges is at most $k$.
\safeqed
\end{proof}

However, rounding weakens fairness guarantees.

\begin{lemmarep}
Rounding edge weights to the nearest $\frac{1}{k}$ before applying Algorithm \ref{alg:stream_rpv_2}
achieves $(\gamma, \frac{f}{n} + \frac{1}{2k})$-minimal-batch-order-fairness for all $\gamma$ and $f$ faulty replicas.
\end{lemmarep}

\begin{proof}
The argument proceeds as in Theorem \ref{thm:rp_fair}.  However, due to the rounding and the faulty replicas,
the weight on an edge observed by the algorithm might be up to $\frac{f}{n} + \frac{1}{2k}$ different from the true observed weight
(as opposed to $\frac{f}{n}$, as in Theorem \ref{thm:rp_fair}).
\safeqed
\end{proof}

\section{Related Work}

\subsection{Order-Fairness}

Closest to this work are 
Aequitas \cite{kelkar2020order} and Themis \cite{themis}, which propose the definition of $\gamma$-batch-order-fairness
and instantiate ordering protocols based on aggregating ordering votes into a total order.
Aequitas achieves $(\gamma, \frac{f}{n})$-minimal-batch-order-fairness
for a fixed choice of $\gamma$, but is not asymptotically live.
\citet{aequitaspermissionless} instantiate Aequitas in a setting where the number of replicas is not known,
and similarly points out the connection between this streaming problem and the classical ranking aggregation problem.
 \S \ref{sec:priorcomparison} gives a more detailed comparison to these works.

\citet{cachin2022quick} give a notion of 
``$\kappa$-differential-order-fairness;'' Theorem D.1 of \cite{themis} shows equivalence 
of this definition and $\gamma$-batch-order-fairness, for a choice of $\gamma$ (as a function of $\kappa$).

Our definition of $(\gamma, \delta)$-minimal-batch-order-fairness (Definition \ref{defn:minimalbatch}) strengthens
the (vacuously satisfiable) $\gamma$-batch-order-fairness definition used in prior work (see \S \ref{sec:orderdefn} for details).

Concurrently with this work, \citet{vafadar2023condorcet}
observe that an adversary can influence the output ordering of many protocols
(including that of \citet{kelkar2020order})
by sending extra transactions (which can force batches to be combined together).
This eliminates the constraints on the output that Definition \ref{defn:gammabatch} gives
(using an example akin to Example \ref{ex:lowgamma}).
Lemma \ref{lemma:inverted} identifies a similar problem.
This observation is a corollary of the fact that the protocol of \citet{kelkar2020order}
(as well as the Ranked Pairs algorithm studied here)
does not satisfy the Irrelevance of Independent Alternatives axiom \cite{arrow1950difficulty}.


\citet{vafadar2023condorcet} also propose using Ranked Pairs to sort transactions within batches 
(as, e.g., those output by a protocol like Themis \cite{themis}).
However,
no modification to the ordering of transactions within a batch output in Aequitas \cite{kelkar2020order}
can give the Ranked Pairs ordering, because Ranked Pairs might need to interleave batches that Aequitas considers incomparable
(as demonstrated in \S \ref{apx:interleave}).  This strategy also cannot mitigate Aequitas's lack of asymptotic liveness.
This mitigation, if applied to Aequitas, is akin to dropping all edges of weight less than $\gamma$ before running Ranked Pairs.
This strategy, therefore, would achieve $(\gamma^\prime, \frac{f}{n})$-minimal-batch-order-fairness for all $\gamma^\prime \geq \gamma$ (a weaker
guarantee than what our protocol provides).

Also concurrently with this work, 
\citet{kiayias2024ordering} propose a different method for ordering transactions within batches 
(but rely on a specific choice of a parameter $\gamma$ to determine those batches).  A version
of that protocol adds timestamps on reported orderings to improve liveness.

As discussed in \S\ref{sec:priorcomparison}, the batches in Themis \cite{themis} are defined slightly differently than in Definition \ref{defn:gammabatch}.
Themis, when computing a graph of ordering dependencies, adds a directed edge between every pair of vertices, choosing the direction of an edge by majority vote (ignoring the parameter $\gamma$, which is considered only on edges adjacent to transactions that some replicas have not yet reported observing).
In so doing, it merges together any two batches unless, for any transaction $\tx$ in one and $\txp$ in the other,
a majority of replicas vote $\tx \before \txp$ (which bypasses the counterexample of \S \ref{apx:interleave}).

Additionally, Themis's \cite{themis} round-based sequencing protocol presents a different problem insurmountable
by any process like the one suggested by \citet{vafadar2023condorcet}
that sorts batches separately.
Namely, Themis may finalize the composition of a batch before  has enough information to guarantee
that, according to the Ranked Pairs ordering, all transactions in the batch must come before
all those not in the batch (and not already in the output).
Thus, it may output two batches in sequence, such that running Ranked Pairs on each batch separately
produces a different output than running Ranked Pairs on the union of the two batches together.
We give a example of this phenomenon in \S \ref{apx:early_finalize}.


\citet{zhang2020byzantine}
add timestamps to ordering votes and sorts transactions by median timestamp to provide a linearizability guarantee,
which is incomparable to $(\gamma, \delta)$-minimal-batch-order-fairness.  \citet{mamageishvili2023buying} propose letting users
pay to reduce their reported timestamps.

\subsection{Social Choice}
Classical work of social choice studies the problem of choosing an order between a fixed, finite set of
candidates given a complete set of preferences from each voter \cite{arrow1950difficulty}.
Prior work studies related models in a limited-information setting.
Lu and Boutilier \cite{lu2013multi}, Ackerman et. al \cite{ackerman2013elections},
and Cullinan et. al \cite{cullinan2014borda} study the setting where a social choice rule has incomplete access (a partial ordering)
to a voter's preferences over a finite set of candidates.
Conitzer and Sandholm \cite{conitzer2005communication}
study the communication complexity of a variety of voting rules.
Fain et. al \cite{fain2019random} study a randomized ordering rule
that requires only a constant number of queries to voter preferences.

Fishburn \cite{fishburn1970arrow} showed that Arrow's impossibility theorem does not hold
when the set of voters is infinite, although Kirman and Sondermann \cite{kirman1972arrow}
find ``dictatorial sets'' of voters.  Grafe and Grafe \cite{grafe1983arrow}
extend these results to the case of infinitely many alternatives,
subject to a continuity condition
on the space of ranking preferences.
Chichilnisky and Heal \cite{chichilnisky1997social}
and Efimov and Koshevoy \cite{efimov1994topological} study the types of rules admissible in the infinite voters setting,
given a topology on ordering preferences.


\subsection{Front-Running}

The study of order-fairness is often motivated by the goal of limiting
 an adversary's ability to order transactions (``front-running'') in public blockchains (as in,
 e.g., \citet{daian2020flash}).
Additional approaches to this problem include commiting to an ordering
before any replica can know transaction contents, using threshold encryption or commit-reveal schemes
\cite{malkhi2022maximal,clineclockwork,zhang2022flash}. 
Li et. al \cite{li2023transaction} study the problem of verifiably computing an 
ordering within a trusted hardware enclave.
These could be applied on top of a streaming ordering algorithm.

Kavousi et. al \cite{kavousi2023blindperm} randomly shuffle blocks of encrypted transactions.
Some decentralized exchanges \cite{ramseyer2023speedex,cowswapproblem,penumbraswap} process transactions in unordered batches,
eliminating the need for an ordering within a block (but not eliminating all order manipulations \cite{zhangcomputation}).
\citet{ferreira2022credible} and \citet{li2023mev} construct sequencing rules specific to decentralized exchanges.



\section{Conclusion}

We study here the problem of ordering transactions in a replicated state machine
as a novel streaming instance of the preference aggregation problem from classical social choice theory.
This viewpoint enables us to strengthen the notions of ``order-fairness'' used in prior work,
and then construct algorithms solving this problem with both strictly stronger ``fairness'' guarantees
and strictly stronger liveness properties than all of the prior work.
To be specific, our streaming variant of Ranked Pairs satisfies 
$(\gamma, \frac{f}{n})$-minimal-batch-order-fairness
for every $\frac{1}{2}<\gamma\leq 1$ simultaneously, and for any number of faulty replicas $f$.
Fairness guarantees smoothly weaken as the number of faulty replicas increases.
For comparison, prior work must fix a choice of $\gamma$ 
 and a bound on the number of faulty replicas in advance, and can only satisfy $\gamma$-batch-order-fairness for that $\gamma$.

These notions of ``order-fairness'' are not the only desiderata with practical significance.
Different contexts in which a system is deployed pose different constraints and financial incentives.
We believe that social-choice style methods for preference aggregation raises many interesting open questions
of practical import.
For example, what aggregation rules lead to high social welfare?  If clients bribe replicas to prefer certain orderings,
what rules maximize (or minimize) this fee revenue?
How do incentives distort the output ordering?

\if\acm1
\bibliographystyle{ACM-Reference-Format}
\fi

\if\llncs1
\bibliographystyle{splncs04}
\fi

\if\ieee1
\bibliographystyle{IEEEtran}
\fi

\bibliography{paper,paper_unique}

\begin{toappendix}
\section{Truncation of Ranked Pairs}
\label{apx:truncation_proof}

Ranked Pairs preserves its output ordering when restricted to considering only an initial segment of its output.
In the statement below, $\mathcal{A}(\cdots)$ denotes Ranked Pairs Voting (Algorithm \ref{alg:rpv}).

\begin{lemma}
\label{lemma:apx_truncation_proof}

Suppose that $(\sigma_1,\ldots,\sigma_n)$ is a set of ordering votes
on a set $V$ of transactions, and let $\sigma=\mathcal{A}(\sigma_1,\ldots,\sigma_n)$.

Consider a set $V^\prime \subset V$ that forms the transactions in an initial segment of $\sigma$,
and for each $i\in [n]$, let $\sigma_i^\prime$ be $\sigma_i$ restricted to $V^\prime$.

Then $\sigma$ extends $\mathcal{A}(\sigma^\prime_1,\ldots,\sigma^\prime_n)$.
\end{lemma}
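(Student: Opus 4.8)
The plan is to exploit the fact that restricting every vote to $V^\prime$ leaves the weights of all edges internal to $V^\prime$ unchanged, and then to show that the greedy edge selection of Ranked Pairs makes \emph{identical} accept/reject decisions on those internal edges in the two runs. First I would record the elementary observation that for $\tx,\txp\in V^\prime$, restricting each total order $\sigma_i$ to $V^\prime$ preserves the relative order of $\tx$ and $\txp$; hence the weight functions agree on every internal pair, i.e.\ $w^\prime(\tx,\txp)=w(\tx,\txp)$ where $w$ and $w^\prime$ are the weights in $G(\sigma_1,\dots,\sigma_n)$ and $G(\sigma_1^\prime,\dots,\sigma_n^\prime)$. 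Next comes the crucial structural fact: because $V^\prime$ is exactly the set of transactions in an initial segment of $\sigma$, and $\sigma$ is the topological sort of the selected graph $H$, there can be no edge of $H$ running from $V\setminus V^\prime$ into $V^\prime$ (such an edge would force a vertex of $V\setminus V^\prime$ to precede a vertex of $V^\prime$ in $\sigma$, contradicting that $V^\prime$ is an initial segment). Since $H$ only grows during the greedy loop, this holds at every intermediate stage as well.

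This locality is what makes the two runs agree. When Ranked Pairs processes an internal edge $(\tx,\txp)$ in the full run, rejecting it means that adding it would close a directed cycle $\txp\to\cdots\to\tx$ among already-chosen edges; any such cycle returns to $\txp\in V^\prime$, so it must use an edge entering $V^\prime$, and by the previous paragraph no chosen edge enters $V^\prime$ from outside. Hence the entire cycle stays within $V^\prime$ and uses only internal chosen edges. I would then run an induction over the internal edges, processed in the order induced by their (common) weights together with a tiebreak that is stable under restriction---that is, derived from a fixed global order on transaction pairs, so that the relative order of two internal edges is the same in both runs. The inductive hypothesis is that, once both runs have handled every internal edge preceding $(\tx,\txp)$, their sets of accepted internal edges coincide in some set $S$. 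The cycle test for $(\tx,\txp)$ in the full run depends only on $S$ by locality, and in the restricted run the graph $H^\prime$ consists of exactly $S$; since weights and relative order agree, both runs decide $(\tx,\txp)$ identically, preserving the hypothesis.

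Carrying the induction to the end shows the accepted internal edge set is the same set $S^\ast$ in both runs. Because the restricted graph is complete on $V^\prime$ and every edge is examined, the topological sort of $(V^\prime,S^\ast)$ is unique (as noted for Algorithm \ref{alg:rpv}) and equals $\mathcal{A}(\sigma_1^\prime,\dots,\sigma_n^\prime)$. Meanwhile the internal edges of the full $H$ are precisely $S^\ast$, so $\sigma$ restricted to $V^\prime$ is a linear order on $V^\prime$ consistent with $S^\ast$; by uniqueness it equals the restricted output. Since $V^\prime$ is an initial segment of $\sigma$, this yields that $\sigma$ extends $\mathcal{A}(\sigma_1^\prime,\dots,\sigma_n^\prime)$, as required.

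I expect the main obstacle to be making the ``no chosen edge enters $V^\prime$ from outside'' step fully rigorous and pinning down the tiebreaking-stability assumption, since it is exactly the interleaving of internal and external edges in the full run---and the possibility that a cycle escapes through $V\setminus V^\prime$---that the locality argument must neutralize; once that is established, the induction and the final topological-sort uniqueness are routine.
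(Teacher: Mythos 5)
Your proof is correct and follows essentially the same route as the paper's: both compare the two greedy executions edge by edge and use the fact that $V^\prime$ is an initial segment of the topological sort to show that any rejection-witnessing path between two vertices of $V^\prime$ cannot leave $V^\prime$, so the internal accept/reject decisions coincide. Your version is somewhat more explicit (the ``no chosen edge enters $V^\prime$ from outside'' invariant, the induction, and the observation that the tiebreaking rule must be stable under restriction, a point the paper leaves implicit), but the key idea is the same.
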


\begin{proof}

The proof follows by executing $\mathcal{A}(\sigma_1,\ldots,\sigma_n)$ and comparing against an execution
on the restricted input.

Whenever the algorithm includes an edge, inclusion cannot create a cycle in the graph of previously included
edges, so it cannot create a cycle in the graph of previously included edges restricted to $V^\prime$.

Whenever the algorithm rejects an edge $(\tx, \txp)$, it must have included already a directed path from
$\txp$ to $\tx$.  If $\tx$ and $\txp$ both lie within $V^\prime$ (that is, the algorithm run on the restricted input
would iterate over this edge), then the previously included path must be entirely contained within $V^\prime$.  Otherwise,
there would be some $\tx^*$ on this path in $V\setminus V^\prime$, but 
that would imply that $\tx*$ would come before $\tx$ in $\sigma$ and thus that $V^\prime$ is not the set of transactions in
an initial segment of $\sigma$ (as $\tx \in V^\prime$). 
\safeqed
\end{proof}

\section{On the Difficulty of Statically Choosing \texorpdfstring{$\gamma$}{Gamma}}
\label{apx:orderexamples}

One challenge with prior work on this topic is that prior systems require choosing a static
value of $\gamma$ \cite{kelkar2020order,themis,cachin2022quick}.
It is not immediately obvious whether a higher or lower value of $\gamma$ provides a stronger fairness guarantee.
In fact, which value of $\gamma$ gives the stronger guarantee depends on the scenario, as we show by example here.
These examples assume that there are no faulty replicas, and therefore invoke not Definition \ref{defn:gammabatch}
of prior work but the stronger Definition \ref{defn:strongminimal}, which fixes the problems around batch minimality
discussed in \S \ref{sec:minimality}.

For simplicity, in this section, suppose that there are $n=10$ replicas.

\begin{example}[High $\gamma$]
\label{ex:highgamma}
Suppose that $3$ replicas receive $\tx_1 \before \tx_2 \before \tx \before \txp$,
$3$ replicas receive $\tx_2 \before \tx \before \txp \before \tx_1$,
and $4$ replicas receive $\tx \before \txp \before \tx_1 \before \tx_2$.


Then Definition \ref{defn:strongminimal}
only ensures that $\tx \before \txp$ (an ordering relation on which every replica agrees)
if $\gamma \geq 0.6$.

\end{example}

\begin{example}[Low $\gamma$]
\label{ex:lowgamma}
Suppose that $6$ replicas receive $\tx \before \txp$, and $4$ $\txp \before \tx$.
Then Definition \ref{defn:strongminimal}
only ensures that $\tx \before \txp$ in an algorithm's output
if $\gamma\leq 0.6$.
\end{example}

\begin{example}[Intermediate $\gamma$]

Suppose that $4$ replicas receive $\tx_1 \before \tx_2 \before \tx_3$,
$4$ replicas receive $\tx_3 \before \tx_1 \before \tx_2$,
and $2$ replicas receive $\tx_2 \before \tx_3 \before \tx_1$.

Then $80\%$ receive $\tx_1\before \tx_2$,
$60\%$ receive $\tx_2 \before \tx_3$,
and $60\%$ receive $\tx_3 \before \tx_1$.
Thus, Definition \ref{defn:strongminimal}
only provides any constraint on the output ordering
(in this case, that the strongest ordering dependency,
between $\tx_1$ and $\tx_2$, is respected)
if $0.6 < \gamma \leq 0.8$.
\end{example}

\section{Ranked Pairs and Interleaved Components}
\label{apx:interleave}

The following example demonstrates the need in Definition \ref{defn:minimalbatch} to allow what would be distinct but incomparable
batches in prior work (specifically Aequitas \cite{kelkar2020order}) to be output in an interleaved ordering, instead of requiring that one
be fully output before the other.
As a corollary, this demonstrates that no process for ordering transactions within Aequitas's batches can always achieve the same ordering
as output in Ranked Pairs.

Consider the following set of ordering votes on $8$ transactions from $16$ replicas.
In this example, we assume no replicas are faulty.

The general pattern is two condorcet cycles of four transactions each, with one set on the extremes and one set in the middle.
Differences from this pattern are bolded for clarify.

\begin{align*}
\tx_1 \before \tx_2  \before \tx_5 \before \tx_6 \before \tx_7 \before \tx_8 \before \tx_3 \before \tx_4 \\
\tx_1 \before \tx_2  \before \tx_6 \before \tx_7 \before \tx_8 \before \tx_5 \before \tx_3 \before \tx_4 \\
\tx_1 \before \tx_2  \before \tx_7 \before \tx_8 \before \tx_5 \before \tx_6 \before \tx_3 \before \tx_4 \\
\tx_1 \before \tx_2  \before \tx_8 \before \tx_5 \before \tx_6 \before \tx_7 \before \tx_3 \before \tx_4 \\
\tx_2 \before \tx_3  \before \tx_5 \before \tx_6 \before \tx_7 \before \mathbf{\tx_4 \before \tx_8} \before \tx_1 \\
\tx_2 \before \tx_3  \before \tx_6 \before \tx_7 \before \tx_8 \before \tx_5 \before \tx_4 \before \tx_1 \\
\tx_2 \before \tx_3  \before \tx_7 \before \tx_8 \before \tx_5 \before \tx_6 \before \tx_4 \before \tx_1 \\
\tx_2 \before \tx_3  \before \tx_8 \before \tx_5 \before \tx_6 \before \tx_7 \before \mathbf{\tx_1 \before \tx_4} \\
\tx_3 \before \tx_4 \before \tx_5 \before \tx_6 \before \tx_7 \before \tx_8 \before \tx_1 \before \tx_2 \\
\tx_3 \before \tx_4 \before \tx_6 \before \tx_7 \before \mathbf{\tx_5 \before \tx_8} \before \tx_1 \before \tx_2 \\
\tx_3 \before \tx_4  \before \tx_7 \before \tx_8 \before \tx_5 \before \tx_6 \before \tx_1 \before \tx_2 \\
\tx_3 \before \tx_4  \before \tx_8 \before \tx_5 \before \tx_6 \before \tx_7 \before \tx_1 \before \tx_2 \\
\tx_4 \before \mathbf{\tx_5 \before \tx_1} \before \tx_6 \before \tx_7 \before \tx_8 \before \tx_2 \before \tx_3 \\
\tx_4 \before \tx_1  \before \tx_6 \before \tx_7 \before \tx_8 \before \tx_5 \before \tx_2 \before \tx_3 \\
\tx_4 \before \tx_1  \before \tx_7 \before \tx_8 \before \tx_5 \before \tx_6 \before \tx_2 \before \tx_3 \\
\tx_4 \before \tx_1  \before \tx_8 \before \tx_5 \before \tx_6 \before \tx_7 \before \tx_2 \before \tx_3
\end{align*}

Sorting edges by weight gives the following:
\begin{enumerate}

	\item[$12/16$] $\tx_1 \before \tx_2$, $\tx_2 \before \tx_3$, $\tx_3 \before \tx_4$, 
					$\tx_5\before \tx_6$, $\tx_6 \before \tx_7$,
					$\tx_7 \before \tx_8$
	\item[$11/16$] $\tx_4 \before \tx_1$, $\tx_8 \before \tx_5$
	\item[$9/16$] $\tx_5 \before \tx_1$, $\tx_4 \before \tx_8$
\end{enumerate}

All others have weight $8/16$ or less.

As such, when Aequitas is run with $\gamma$ between $9/16$ and $11/16$,
would recognize the sets $\lbrace \tx_1, \tx_2, \tx_3, \tx_4\rbrace$ and $\lbrace \tx_5, \tx_6, \tx_7, \tx_8\rbrace$ as incomparable batches,
and would output all of one before beginning to output the other---no matter how it orders transactions within these two batches.

However, Ranked Pairs, on this input, would first select all of the weight $12/16$ edges, reject the weight $11/16$ edges,
and accept the weight $9/16$ edges.
This means that, no matter how the rest of the edges are considered, Ranked Pairs will at least output
$\tx_5 \before \tx_1$, $\tx_1 \before \tx_4$, and $\tx_4 \before \tx_8$.  

This ordering thus cannot be the result of any Aequitas invocation
with this choice of $\gamma$.

\section{Themis and Early Finalization}
\label{apx:early_finalize}

Themis \cite{themis}, by design,
achieves bounded liveness
by acknowledging the possibility that a Condorcet cycle
could be output spread across multiple rounds of its protocol.

For simplicity, assume for the example that the number of faulty replicas $f$ is $0$,
and assume that $\gamma n$ and $\frac{n}{3}$ are integers.  For simplicity of exposition, we assume that $\frac{2}{3} + \frac{1}{n} < \gamma < 1$.
Analogous examples will work for other values of $n$ and $\gamma$.

There will be five transactions in this example.
Ultimately, replicas will receive transactions in the following order.

\begin{align*}
\text{Class 1}~ & ~1 \text{ replica} & \tx_1 \before \tx_2 \before \tx_4 \before \tx_3  \before \tx_5 \\
\text{Class 2}~ & ~(1-\gamma)n \text{ replicas} 				  & \tx_1 \before \tx_2 \before \tx_5 \before \tx_4 \before \tx_3 \\
\text{Class 3}~ & ~n(\gamma - \frac{2}{3}) - 1 \text{ replicas} & \tx_1 \before \tx_2 \before \tx_3 \before \tx_5 \before \tx_4 \\
\text{Class 4}~ & ~\frac{n}{3} \text{ replicas} 				  & \tx_3 \before \tx_1 \before \tx_2 \before \tx_5 \before \tx_4 \\
\text{Class 5}~ & ~\frac{n}{3} \text{ replicas} 				  & \tx_2 \before \tx_3 \before \tx_1 \before \tx_5 \before \tx_4 \\
\end{align*}

Clearly, if Ranked Pairs is run on the full ordering votes, the result would have $\tx_5 \before \tx_4$, as all but $1$ replica
receive those transactions in that order (and that is the first edge considered by Ranked Pairs).

Themis operates in rounds, each of which has three phases.
First, in the ``FairPropose'' phase,
replicas submit ordering votes to a leader, who gathers them and publishes a proposal.
Suppose that, due to the timing of messages over the network,
the leader receives the following set of votes.

\begin{align*}
	\text{Class 1} ~&~ \tx_1 \before \tx_2 \before \tx_4 \before \tx_3 \\
	\text{Class 2} ~&~ \tx_1 \before \tx_2 \before \tx_5 \before \tx_4 \before \tx_3\\
	\text{Class 3} ~&~ \tx_1 \before \tx_2 \before \tx_3\\
	\text{Class 4} ~&~ \tx_3 \before \tx_1 \before \tx_2\\
	\text{Class 5} ~&~ \tx_2 \before \tx_3 \before \tx_1
\end{align*}

Themis then considers transactions by the number of replicas which include the transactions in their votes.
Those in $n$ votes are ``solid'', those with fewer than $n$ but at least $\gamma n + 1$ are ``shaded,''
and the rest are ``blank.''  Clearly transactions $\tx_1,\tx_2,$ and $\tx_3$ are solid,
$\tx_4$ is shaded, and $\tx_5$ is blank.

Themis then builds a dependency graph between the solid and shaded vertices.  
A dependency is added between two transactions $\tx$ and $\txp$ if at least $(1-\gamma)n+1$ replicas
vote $\tx\before \txp$ and fewer replicas vote $\txp \before \tx$.

As such, the dependencies that get added here 
are $\tx_1 \before \tx_2$, $\tx_2 \before \tx_3$, $\tx_3 \before \tx_1$,
$\tx_1 \before \tx_4$, $\tx_2 \before \tx_4$ and $\tx_4 \before \tx_3$.

The next step is to compute the strongly connected components of this graph and topologically sort it.
In this case, the graph has one component.
This component is the last in the ordering that contains a solid vertex,
so, per Themis's rules, this graph is the output of FairPropose.

Next, replicas send ``updates'' to a ``FairUpdate'' algorithm, which orders shaded transactions in the output of FairPropose relative
to the other solid and shaded transactions in that output.
This is where applying Ranked Pairs (or any ranking algorithm) to each output batch of Themis individually breaks down.
The response of honest replicas at this point (namely, those in classes 3, 4 and 5) is to vote that they received $\tx_4$ after
each of $\tx_1,\tx_2$ and $\tx_3$.  In fact, no edges get added to the dependency graph at this stage,
because the graph is already complete.

Finally, in ``FairFinalize,'' Themis computes some ordering on transactions $\tx_1, \tx_2,\tx_3,$ and $\tx_4$, and includes them in the output.

Observe that $\tx_5$ is forced to come after $\tx_4$.

\section{Algorithm \ref{alg:stream_rpv} Is Not Asymptotically Live}
\label{apx:non_live}

Consider the following setting.  There are two replicas (neither of which is faulty),
and a countably infinite set of transactions $\lbrace \tx_1,\tx_2,\tx_3,\ldots\rbrace$.

\begin{example}~\\
\label{ex:badorder}
Replica $1$ receives transactions in the order \\
$\tx_2 \before \tx_1 \before \tx_4 \before \tx_3 \before \tx_6 \before \tx_5 \before \ldots \before \tx_{2*i} \before \tx_{2*i - 1} \before \ldots$, for all $i \geq 1$.\\
Replica $2$ receives transactions in the order \\
$\tx_1 \before \tx_3 \before \tx_2 \before \tx_5 \before \tx_4 \before \tx_7 \before \tx_6 \ldots \before \tx_{2*i+1} \before \tx_{2*i} \ldots$,
for all $i\geq 1$.

This means that any edge $(\tx_i, \tx_j)$ has weight $\frac{1}{2}$ if $i=j+1$ or $j=i+1$,
and otherwise $1$ (resp. $0$) if $i<j$ (resp. $i>j$).

Additionally, suppose that of the edges of weight $1/2$ of the form $e_i=(\tx_{i+1}, \tx_{i})$, the deterministic tiebreaking ordering
sorts $(\tx_{i+1}, \tx_{i})$ just before $(\tx_{i}, \tx_{i+1})$ and $e_i$ before $e_j$ for $i>j$.
\end{example}

\begin{theorem}
Algorithm \ref{alg:stream_rpv} never outputs any transaction on any finite subset of the input in Example \ref{ex:badorder}.
\end{theorem}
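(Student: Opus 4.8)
The plan is to fix an arbitrary finite truncation of the input in Example~\ref{ex:badorder} and show that the transaction that would be emitted first is always incident to an \emph{indeterminate} edge, so that the output (the initial segment up to the first such transaction) is empty. First I would describe the streamed ordering graph (Definition~\ref{defn:streamedgraph}) for the truncation: writing $V$ for the transactions seen by both replicas, every non-adjacent forward pair $(\tx_i,\tx_j)$ with $i<j$ has weight $1$, every adjacent pair contributes two weight-$\frac12$ edges, $w(\tx,\hat v)=1$ for all $\tx$, and, crucially, at the frontier there is always a partially-seen transaction that precedes some seen transaction, so at least one edge $(\hat v,\tx_r)$ has weight $1$. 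By Observation~\ref{obs:maxweight} the weight-$1$ forward edges are all chosen, and since they are consistent with the index order there is no reverse weight-$1$ path among them, so each is marked \emph{determinate}; the weight-$1$ two-cycle $(\tx_r,\hat v),(\hat v,\tx_r)$ leaves the future in-edge $(\hat v,\tx_r)$ \emph{indeterminate}.

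The heart of the argument is a downward induction over the weight-$\frac12$ edges in the tiebreak order specified in Example~\ref{ex:badorder} (the backward edges $e_i=(\tx_{i+1},\tx_i)$ processed from the highest index down, each just before its reverse). I would show that every $e_i$ is left \emph{indeterminate}. By Lemma~\ref{lemma:local} the algorithm searches for a $\tx_i\to\tx_{i+1}$ path only inside the local neighborhood $U_{e_i}=\{\tx_{i-1},\tx_i,\tx_{i+1},\tx_{i+2},\hat v\}\cap(V\cup\{\hat v\})$. The inductive step exhibits the \emph{helper path} $\tx_i\to\tx_{i+2}\to\tx_{i+1}$, whose first edge is the determinate forward edge $(\tx_i,\tx_{i+2})$ and whose second edge is the already-indeterminate $e_{i+1}$; this path lies in $U_{e_i}$ and contains an indeterminate edge, so (by Lemma~\ref{lemma:cycles}) there is a determinate-or-indeterminate reverse path but no purely determinate one, which is exactly the condition for $e_i$ to be marked \emph{indeterminate}. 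The seed of the induction (and any place where $\tx_{i+2}\notin V$, including holes caused by the staggered prefixes) is supplied by the future path $\tx_i\to\hat v\to\tx_{i+1}$, using the indeterminate future in-edge $(\hat v,\tx_{i+1})$.

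Finally I would observe that the forward-minimal transaction $\tx^\ast$ of $V$ has no determinate edge pointing into it (all weight-$1$ edges incident to it point outward, and any incoming future edge is indeterminate), so it must appear first in the determinate partial order; yet the induction shows $\tx^\ast$ borders an indeterminate edge (its backward weight-$\frac12$ edge, or an indeterminate future edge when its neighbor is unseen). Since nothing is determinate-before $\tx^\ast$, no transaction can precede it in the output, so the first transaction incident to an indeterminate edge is $\tx^\ast$ itself, and Algorithm~\ref{alg:stream_rpv} outputs nothing on this truncation; as the truncation was arbitrary, it never outputs any transaction.

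The main obstacle is making the cascade induction airtight: I must verify that at each step the local neighborhood $U_{e_i}$ admits the helper path (it does, since $\tx_{i+2}\in U_{e_i}$) while containing \emph{no} purely determinate reverse path---otherwise the edge would be rejected and the relative order decided, breaking the chain. This requires careful bookkeeping of which edges are determinate versus indeterminate as the greedy loop progresses, together with a case analysis handling non-contiguous $V$ (holes), where the future vertex must bridge the gap via the indeterminate edges $(\hat v,\cdot)$ that the missing, partially-seen transactions induce.
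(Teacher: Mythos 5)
Your proposal is correct and follows essentially the same route as the paper's proof: seed the indeterminacy at the frontier via the future vertex $\hat{v}$, then cascade it down the chain of weight-$\frac{1}{2}$ edges by exhibiting, for each $e_i$, a two-edge reverse path (a determinate weight-$1$ forward edge composed with the already-indeterminate $e_{i+1}$, or the $\hat{v}$-path at the seed) inside $U_{e_i}$, and conclude that the first transaction in the topological order borders an indeterminate edge. Your write-up is in fact slightly more careful than the paper's (whose inductive-step path has an index shift) about verifying that the helper path lies in $U_{e_i}$ and that no purely determinate reverse path exists.
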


\begin{proof}
On any finite truncation of this input,
suppose that $N$ is the highest index such that $\tx_N$ and all $\tx_i$ for $i\leq N$ appear in the votes of both replicas.

Then $\tx_{N+1}$ must be present in the output of one of the replicas, but not both.
This means that the other replica's vote (that does not contain $\tx_{N+1})$ may contain $\tx_{N+2}$, 
but the transaction after this in the replica's true vote is $\tx_{N+1}$, so the other replica's vote must stop here.
As such,
the streamed ordering graph used in Algorithm \ref{alg:stream_rpv} will contain all transactions up to $\tx_N$,
possibly $\tx_{N+2}$, and $\hat{v}$.  There will be one edge from $\hat{v}$ to $\tx_N$, and if it is present, an edge from $\hat{v}$ to $\tx_{N+2}$.

Algorithm \ref{alg:stream_rpv} will accept all of the edges of 
weight $1$ and then proceed to the edges of weight $\frac{1}{2}$.  The first to be considered
will be $e_{N}$, which will be marked \textit{indeterminate}
because of the path $(\tx_{N-1}, \hat{v}),(\hat{v}, \tx_{N})$
(and so will the reversal of $e_N$ also be marked \textit{indeterminate}).
Continuing through the sequence of edges,
If $e_k$ is marked \textit{indeterminate},
then so too will $e_{k-1}$, because of the path
$(\tx_{k-2}, \tx_{k}), e_k$.
Note that this is the only path that Algorithm \ref{alg:stream_rpv} considers when visiting $e_{k-1}$,
as the only other transactions in $U_{\tx_{k-1}, \tx_{k-2}}$ are $\tx_{k}$ and $\tx_{k-3}$.

Thus, Algorithm \ref{alg:stream_rpv} never outputs any transactions on a finite subset of the input in Example \ref{ex:badorder}.
\end{proof}

We remark that this construction requires that the ordering between edges of weight $\frac{1}{2}$ is not isomorphic to a subset of $\omega$.

\section{Oracle Implementation}
\label{sec:oracle_impl}

Algorithm \ref{alg:stream_rpv_2} relies on an oracle to a past invocation of the algorithm.
Implementing the oracle naively would require maintaining a map from the set of all edges of the past invocation
to their status, the cost of which would grow logarithmically.
However, the only edges for which a path search must be performed are those $(\tx, \txp)$ for which neither $\tx$ nor $\txp$
were included in the prior output (if either $\tx$ or $\txp$ were in the prior output,
then monotonicity immediately implies whether the edge is included or excluded).
As such, the map in the oracle could be pruned of edges connected to transactions 
in the prior output. Although the number of such edges is not bounded, it is not guaranteed to grow over time without bound.

\section{Algorithms}
\label{apx:algorithms}

We include for completeness Algorithm \ref{alg:rpv}, which reproduces the (non-streaming)
 Ranked Pairs algorithm of \citet{tideman1987independence}.

\begin{algorithm}
\caption{Ranked Pairs Voting}
\label{alg:rpv}
\KwIn{An ordering vote from each replica $\lbrace \sigma_i \rbrace_{i\in [n]}$}
$G=(V,E,w) \gets G(\sigma_1,...,\sigma_n)$ \aaaielide{\tcc*[r]{Compute ordering graph}}{\\}
$H \gets (V,\emptyset)$ \aaaielide{\tcc*[r]{Initialize empty graph on same vertex set}}{\\}
Sort $E$ by edge weight\\
\ForEach{$e\in E$}{
	Add $e$ to $H$ if it would not create a directed cycle in $H$
}
\KwOut{The topological sort of $H$}
\aaaielide{\tcc*[r]{The loop considered every edge, so the topological sort of $H$ is unique}}{}
\end{algorithm}

Algorithm \ref{alg:stream_rpv} forms the first step to adapting Ranked Pairs to the streaming setting.
However, its reliance on a deterministic tiebreaking rule means that it cannot guarantee asymptotic liveness.

\begin{algorithm}
\caption{Streaming Ranked Pairs, Preliminary Version}
\label{alg:stream_rpv}
\KwIn{An ordering vote from each replica $\lbrace \sigma_i \rbrace_{i\in [n]}$}
\aaaielide{
$\hat{G}=(V,E,w) \gets \hat{G}(\sigma_1,...,\sigma_n)$ \tcc*[r]{Compute streamed ordering graph}
$H \gets (V,\emptyset)$ \tcc*[r]{Initialize empty graph on same vertex set}}
{
  $\hat{G}=(V,E,w) \gets \hat{G}(\sigma_1,...,\sigma_n)$, $H\gets (V, \emptyset)$\\
}
\ForEach{edge $(\tx, \hat{v})$ and $(\hat{v}, \tx)$ with $w(e)=1$}{
  Add $e$ to $H$, and mark it as \textit{indeterminate}
}
Sort $E$ by edge weight, with a fixed tiebreaking rule.\\
\ForEach{$e=(\tx_i, \tx_j)\in E$}{
  $U_{\tx_i, \tx_j} \gets (V\setminus (R_{\tx_i}\cup P_{\tx_j}))\cup \lbrace \hat{v}\rbrace$\\
  \aaaielide{\tcc*[r]{Recall that $\hat{v}$ is the ``future'' vertex of the streamed ordering graph}}{}
  If there is a directed path in $H$ restricted to $U_{\tx_i,\tx_j}$
    from $\tx_j$ to $\tx_i$
    where every edge is \textit{determinate}, then do not include the edge in $H$.\\
  If there is no directed path in $H$ restricted to $U_{\tx_i,\tx_j}$
    from $\tx_j$ to $\tx_i$
    where every edge is either \textit{determinate} or \textit{indeterminate},
    add $(\tx_i, \tx_j)$ to $H$ and mark it as \textit{determinate}.\\
  Otherwise, add $(\tx_i, \tx_j)$ to $H$ and mark it as \textit{indeterminate}.
}
\KwOut{The topological sort of $H$ (with sorting comparisons restricted to \textit{determinate} edges), 
up to but not including the first transaction $\tx$ bordering an \textit{indeterminate}
edge}
\end{algorithm}

\section{Protocol Instantiation}
\label{sec:instantiation}
The algorithms thus far have assumed that every replica will eventually vote on every transaction.
Any faulty replica, however, could exclude a transaction from its output, leaving the transaction
(and the protocol) permanently stuck.

The main idea to resolve this difficulty
is that if a replica refuses to vote on a transaction, then that replica must be faulty.
If a replica is faulty, then it suffices for the rest of the replicas to make up a vote on its behalf.

We give here an example method for instantiating our sequencing protocol on top of existing protocol components,
although this is far from the only method.

As a base layer, we require some consensus protocol, such as \cite{yin:hotstuff,lokhava:stellar,castro1999practical,miller2016honey}, 
that proceeds in \textit{rounds}.
At any time, every replica can submit a continuation of its previous ordering; that is, an ordered list of transactions
that extends its ordering vote.  The consensus protocol then comes to consensus on a set of received ordering votes.
The Streaming Ranked Vote algorithm is run (by every replica)
on the set of received votes after every new set of votes is finalized by the consensus protocol.
Assuming correctness of the consensus protocol, every replica will run the algorithm on the same input,
and thereby produce the same output ordering.

We assume that every replica has a registered public key. 
Every vote continuation must be cryptographically signed by the associated secret key.
Furthermore, every continuation must be accompanied by the hash of the previously submitted continuation,
so that if one continuation vote from a replica is lost (perhaps by malicious activity of consensus participants)
the replica's ordering cannot be manipulated.

We make a network synchrony assumption, akin to Assumption \ref{ass:sync}.
If a transaction appears first in some replica's vote in round $T$ of the consensus protocol,
then we assume that there is some known bound $k$ such that every other honest replica will have time to 
receive and submit a vote on the transaction before the end of round $T+k$.
As such, after every round $r$, the protocol computes the set of all transactions that have not received
votes from every replica but were voted on by some replica at or before round $r-k$. For each of these transactions $\tx$,
the protocol appends a vote for $\tx$ at the end of a replica's current vote sequence if $\tx$ is not already present.
When appending multiple transactions to a replica's vote, transactions can be sorted deterministically (i.e. by hash).

This protocol guarantees that every transaction will receive a vote from all replicas within at most $k$ rounds
after it is first submitted to the protocol.  
In this setting, $k$ must be configured in advance by the protocol.  The parameter must be set so that every honest replica is able
 to submit a vote on a transaction within $k$ rounds.  Naturally, the choice of $k$ depends on
 network conditions
the censorship-resistance (or ``chain-quality'' \cite{garay2015bitcoin}) of the consensus protocol.

As an example alternative, one could also use the protocol in \citet{kelkar2020order}, which is based on atomic broadcast and a ``Set Byzantine Agreement'' primitive.
This protocol waits until all honest replicas are able to add a transaction $\tx$
 to their ordering votes,
and then invokes a protocol to agree on the set of replicas $U_{\tx}$ which have the transaction in their ordering votes.
This set may not be the set of all replicas, but includes all honest replicas.  To use our ordering protocol on this consensus protocol,
it suffices to make up a vote for the (presumed faulty) replicas not in $U_{\tx}\cap U_{\txp}$.  
Replicas in $U_{\tx} \setminus U_{\txp}$ 
can be presumed to vote $\tx \before \txp$ and vice versa, and replicas not in $U_{\tx}\cup U_{\txp}$ can be presumed to vote (arbitrarily) by, e.g., comparing deterministic hashes of $\tx$ and $\txp$.

\end{toappendix}

\end{document}